\newtheorem{claim}{Claim}
\newtheorem{definition}{Definition}
\newtheorem{proposition}{Proposition}
\newtheorem{corollary}{Corollary}
\newtheorem{theorem}{Theorem}
\newtheorem{lemma}{Lemma}
\newcommand{\bfP}{{P}}
\newcommand{\bfA}{{A}}
\newcommand{\bfB}{{B}}
\newcommand{\bfC}{{C}}
\newcommand{\bfH}{{H}}
\newcommand{\bfL}{{L}}
\newcommand{\bfX}{{X}}
\newcommand{\bfY}{{Y}}
\newcommand{\bfS}{{S}}
\newcommand{\rank}{\mbox{rank}}
\newcommand{\sign}{\mbox{sign}}
\newcommand{\tree}{\mathcal{T}}
\newcommand{\child}{\Gamma}
\newcommand{\subgames}{\mathcal{S}}
\newcommand{\cross}{{C}}
\newcommand{\osp}{\mathcal{O}}
\newcommand{\laminar}{\mathcal{R}}
\begin{document}

\title{The Empirical Implications of Rank in Bimatrix Games}
\author{
Siddharth Barman\thanks{Center for the Mathematics of Information, California Inst. of Technology. \tt{barman@caltech.edu}.}
\and
Umang Bhaskar\thanks{Center for the Mathematics of Information, California Inst. of Technology. \tt{umang@caltech.edu}. }
\and
Federico Echenique\thanks{Humanities and Social Sciences, California Inst. of Technology. \tt{fede@hss.caltech.edu}.}
\and
Adam Wierman\thanks{Computing and Mathematical Sciences, California Inst. of Technology. \tt{adamw@caltech.edu}.}}

\maketitle

\thispagestyle{empty}

\begin{abstract}
We study the structural complexity of bimatrix
games, formalized via {\em rank}, from an empirical perspective.  We
consider a setting where we have data on player behavior in diverse strategic
situations, but where we do not observe the relevant payoff functions. We prove that high
complexity (high rank) has empirical consequences when arbitrary data is considered.
Additionally, we prove that, in more restrictive classes of data (termed
laminar), any observation
is rationalizable using a low-rank game: specifically a zero-sum
game. Hence complexity as a structural property of a game is not always
testable. Finally, we prove a general result connecting the structure
of the feasible data sets with the highest rank that may be needed to
rationalize a set of observations.
\end{abstract}


\newpage

\setcounter{page}{1}

\section{Introduction}
\label{sec:intro}

This paper studies the ``structural'' complexity of bimatrix games from an \emph{empirical perspective}.  That is, we assume that we have data on players' behavior (choices) within a game-theoretic environment, but are ignorant of their payoffs; and we seek to understand whether the data can be explained via games with ``simple'' structure or whether the data implies that the games have ``complex'' structure, where we take the \emph{rank} of a game as a measure of its structural complexity.  Thus, we seek to characterize the empirical/testable implications of rank in bimatrix games. Recall that the rank of a bimatrix game is the rank of the sum of the payoff matrices of the two players.

While questions related to the complexity of economic models have been
a driving force behind research at the intersection of computer
science and economics, the empirical approach of the current paper is
nonstandard for this literature.  In particular, the dominant
perspective of work in this direction has been an
\emph{algorithmic perspective}. Most work has taken the
economic model to be fixed and literal, and then proceeded to ask
about the computational demands placed on the agents by the model.
For example, in the case of noncooperative games, when the payoffs are
fixed and given, computing a mixed Nash
equilibrium has been shown to be hard~\cite{daskalakis09,etessami10},
even for 2-player
games~\cite{chen09}.\footnote{Computing $3$-player Nash
  Equilibria is FIXP-complete \cite{etessami10}; the $2$-player
  case is PPAD-complete \cite{chen09}. For additional background, we refer the interested reader to \cite{Daskalakis09survey}
 and to \cite{roughgarden10}} In contrast, this work takes an
empirical perspective  motivated by \emph{revealed-preference theory}.

\subsection*{The revealed-preference approach}

Revealed-preference theory seeks to understand the empirical implications of
economic models: given data from an observed phenomenon, the task is
to understand how generally a model is applicable
(e.g., how large the class of explainable data is) and to determine
what instance of the theory is consistent with the data (e.g., to
determine the payoff matrices that are consistent with the
data). The revealed-preference approach has a long tradition in economics,
e.g., see
\cite{samuelson1938note,samuelson1938empirical,houthakker1950revealed,afria67,varia82,varia84}. The
paper \cite{varian2006revealed} contains an excellent survey.

If one thinks of economics, and game theory, as a positive (predictive)
science (and, arguably, the vast majority of economists do), then the
revealed-preference approach is unavoidable.
Economists use game theory to understand the behavior of
human players, and of organizations run by human agents. When
observing such behavior, economists do not have access to
agents' payoffs. Thus, 
payoffs, and utility functions, are really unobservable  theoretical
constructs. Payoffs and utilities do not have an independent empirical
meaning. Instead, one must understand which observable
behaviors have corresponding payoffs such that the
theory predicts the observed behavior.

\subsection*{Noncooperative games}

Our work focuses on revealed-preference theory in the context of
noncooperative games. A data set is a collection of observed choices
taken by a set of agents in different strategic situations. We want to
know when there are payoffs for these agents that can explain the data
i.e., payoffs such that the observations are consistent with  the
theory of Nash equilibrium. More specifically, our goal is to
understand the empirical consequences of the structural complexity of
bimatrix games.  We adopt \emph{rank} as our notion of structural
 complexity. 

We adopt the model of observed data from \cite{sprum00} (see also \cite{yanovskaya1980}). We restrict ourselves to two-player games, and assume a grand set of strategies for each player. Each player is then restricted to a subset of its grand set, and we observe the joint strategy played in this subgame. This is called the observed choice. A data set is a collection of subgames, together with the observed choice in each subgame.

This model of observed data is standard in the revealed-preference
literature. It is a straightforward extension of the classical
single-player setting studied by
\cite{samuelson1938note,houthakker1950revealed}, where the agent is
presented with different sets of alternatives and we observe the
alternative chosen in each case. Our model is also motivated by
experimental economics. For example, \cite{camerer2003behavioral}
describes experiments where treatments differ in the set of strategies
available to the players. A typical example is the difference between
the ultimatum and dictatorship games; another example is
\cite{costa2006cognition}, who look at guessing games and vary the set
of possible guesses subjects can make.

A data set is \emph{rationalizable} if there exist payoff matrices such that the observed choices are strict and pure Nash equilibrium in the corresponding subgames. We adopt strict Nash as a natural discipline for avoiding trivial rationalizations. Without such a discipline, all observed choices can be rationalized by setting all payoff values for the players to be zero. We focus on pure Nash following the literature in economics on revealed-preference theory~\cite{galambos,lee10,sprum00}. There are other solution concepts where revealed-preference theory is relevant, and we discuss some of these in the conclusion.

The rank of the sum of the payoff matrices is our measure of the structural complexity of the rationalization. A paradigmatic notion of a simple game is the class of {\em rank-zero} games, which are simply zero-sum games. Zero-sum games are clearly simple: they are easy to analyze using linear programming; their equilibria are easy to find; and player behavior has an intuitively simple heuristic: min-max. More generally, games with small rank are also intuitively simple.  For example, rank-one games have a particularly nice interpretation: these correspond to games where the welfare generated is simply the product of the marginal welfare generated by the actions of the players.  Further, games with low rank are structurally simple since a matrix of rank-$k$ can be decomposed into $k$ rank-one matrices. Our measure thus captures this intuitive notion of simplicity in bimatrix games. The rank of a game also has implications for the computation of \emph{mixed} Nash equilibria. Mixed Nash equilibria in zero-sum games and rank-one games can be found in polynomial time~\cite{Nisan2007,Adsul2011}, while in rank-$k$ games, for fixed $k$, $\epsilon$-approximate Nash equilibria can be found in time poly($1/\epsilon$)~\cite{kannan2010games}. Hence, rank is an important concept for varied notions of complexity.

\subsection*{Contributions of this paper}

Said succinctly, the goal of this paper is to characterize the
empirical implications of rank in bimatrix games.  That
is, to understand whether data can always be explained via
low-rank (simple) games or whether high-rank (complex) games are
necessary. It is a priori possible that, while highly complex games are
relevant theoretically, they are not needed empirically.

We prove four main results that shed light on the empirical implications of rank in bimatrix games. We first establish
that there exist data sets that
cannot be rationalized using low-rank games (Theorem \ref{thm:highrank}).  Specifically, a data set over $n$ strategies may require a rationalizing game to have rank $\Omega(\sqrt{n})$.  Consequently, the notion of structural complexity
studied here has empirical consequences: it is testable, i.e, refutable via data. This
conclusion is in contrast with other models in economics:
\cite{echengolovinwierman2011}  prove that (computational) complexity
has no empirical consequences for the model of individual consumer
choice. Our result highlights that their conclusion does not hold for the standard noncooperative
model, at least when we interpret complexity as low rank.

In order to refine the conclusion of the first result, our second and third results focus on two cases in which the structure of the observed data is restricted. In our second result, Theorem \ref{thm:rankone}, we prove that if all observations have the players facing the same set of
strategies, then the data can be rationalizable as a simple game, specifically a rank-one game.
Similarly, in our third result, Theorem \ref{thm:laminar}, we again consider a restricted class of data sets in which each set of alternatives has a single observation and different sets are either disjoint or nested, i.e., they have a {\em laminar} structure.  We prove that, in this case, the data is always rationalizable using a zero-sum game. Thus, for this (restrictive) family of data sets, structural complexity (rank) again lacks empirical bite: any data set can be rationalized using the simplest family of games, zero-sum games.

Note that, though restrictive, the classes of data sets studied in the results discussed above are relevant for typical designs used in experimental economics.  In particular, these two classes of data sets correspond to situations where an experimenter either keeps the same actions available during repeated plays or successively adds feasible actions for players.

Our final result shows that for arbitrary data sets, the
rank needed to rationalize the data is tied to the degree of nonlaminarity of the sets of alternatives. This generalizes the result for the laminar case discussed above, which states that laminar data sets can always be rationalized with a zero-rank game. We prove that if the
crossing span (see Definition \ref{defn:crossing-span}) is $k$, then the rank needed to rationalize the
observations is at most $k$. Thus, a data set can refute the assumption of low rank only if the data is rich enough, i.e., has a large crossing span.

Our results give new insights into the empirical consequences of low- and high-rank games. The notion of a high-rank, complex, game does have empirically testable implications, but these implications depend on being able to observe rich families of data. If data is restricted, for example contains a laminar family of subgames, then it is not possible to refute the assumption of low rank. In general, we tie the empirical consequences of rank to the crossing span of the data set.

\subsection*{Relationship to prior work}

The revealed-preference approach is well studied in the
classical context of consumer choice theory (see the large literature
started by \cite{samuelson1938note,houthakker1950revealed,afria67},
and the discussion of revealed preference and complexity in
\cite{echenique2011complexity}, based on
\cite{echengolovinwierman2011}).

There is much less work in the context of noncooperative games. The framework we use in this paper is adopted from \cite{sprum00}, which first studied revealed-preference theory in this context.  Though we use the same setting as \cite{sprum00}, our work differs in an important sense. Sprumont requires that one observes choices from all possible subsets of strategies. In particular, he needs choices from subsets where all players but one are restricted to a single feasible alternative. From such observations, Sprumont infers a single-agent's complete revealed-preference graph, and then construct each agent's payoffs. Our work relaxes this assumption. We consider scenarios where not all possible sets of strategies are presented to the agents. The resulting revealed-preference graph will not be perfectly informative about agents' payoffs.

More recently, there are two other papers that study
noncooperative games via the revealed-preference approach, \cite{galambos,lee10}.
Like us,  \cite{galambos} does not require observed choices in all possible subgames. In \cite{galambos}, the problem of
deciding if a data set is rationalizable by a game where the observations are the only Nash equilibria is proven to be NP-hard. Our results imply that this hardness only arises when the sets of strategies are non-laminar. 
More related to our work is \cite{lee10}, which also studies zero-sum games from the revealed-preference perspective. However, \cite{lee10} has the same restrictions as \cite{sprum00}, i.e., all
possible subsets of strategies are observed.  In this setup, \cite{lee10} proves that zero-sum has a very particular empirical implication (on
top of the ones imposed by Nash rationalizability): the well-known property of exchangeability of Nash equilibria of zero-sum games. Our results differ from those in \cite{lee10} in that we do not require all subsets of strategies to be observable, and that we study low-rank games that are not necessarily zero-sum.
\section{Preliminaries}
\label{sec:prelims}

\subsection*{Two player normal form games}

In this paper we study two-player games in normal form. A two-player game in normal form is given by a pair of matrices $(A, B)$ of size $n \times n$, which are termed the \emph{payoff matrices} for the players. The first player, also called the row player, has payoff matrix $A$, and the second player, or the column player, has payoff matrix $B$. The strategy set for each player is $[n]$. In this standard setup, if the row player plays strategy $i$ and column player plays strategy $j$, then the payoffs of the two players are $A_{ij}$ and $B_{ij}$ respectively. A \emph{pure Nash equilibrium} is a strategy profile of the players $(i,j) \in [n] \times [n]$ such that for all $i', j' \in [n]$, $A_{ij} \geq A_{i' j}$ and $B_{ij}  \geq B_{i j'}.$ If these inequalities are strict, then $(i,j)$ is said to be a \emph{strict Nash equilibrium}.

Our focus is on games with low rank, where the \emph{rank} of a game is the rank of the matrix $C := A+B$.  For a zero-sum game, $C = \mathbf{0}$.

Additionally, we often consider subgames of a game, which
correspond to restrictions on the strategies available to the
players. A subgame is denoted by $(I, J)$ where $I, J \subseteq
[n]$ and $I$ and $J$ are the strategies available to the row and
column player respectively.

\subsection*{Observable data and the revealed preference question}

The core of the revealed preference approach is the definition of the data that is observable.  In this paper, we adopt the treatment of \cite{sprum00} and we observe the strategies chosen by the two players during a sequence of subgames. No other information about the payoff matrices is observed.

Specifically, a \emph{data set} consists of a set of triples $T = \{(
(i,j), I,J) \}_{i,j,I,J}$ where $(I,J)$ is a subgame, and $i \in I$ and $j \in
J$. Each such triple is called an \emph{observation}. A triple $(
(i,j), I, J )$ denotes that in the subgame $(I,J)$, the row player
picked strategy $i$ and column player picked  $j$. The strategy profile $(i,j)$ is called the observed choice in the observation. 

Given this data, the revealed preference question is to determine for which data sets $T$ there exist payoff matrices $A$ and $B$ such that $(i,j)$ is a pure and \emph{strict} Nash equilibrium in the subgame $(I,J)$ for all triples $( (i,j), I,J)\in T$.  More formally, we
say that a data set $T$ is \emph{rationalizable} if there exist such payoff matrices. Without the discipline of strictness in the definition of rationalizability, data could always be rationalized via trivial games.

\begin{figure}[t]
\begin{center}
\subfloat[Data set that is rationalizable (via a rank one game).]{
\hspace{.49in}
\includegraphics[scale=0.25]{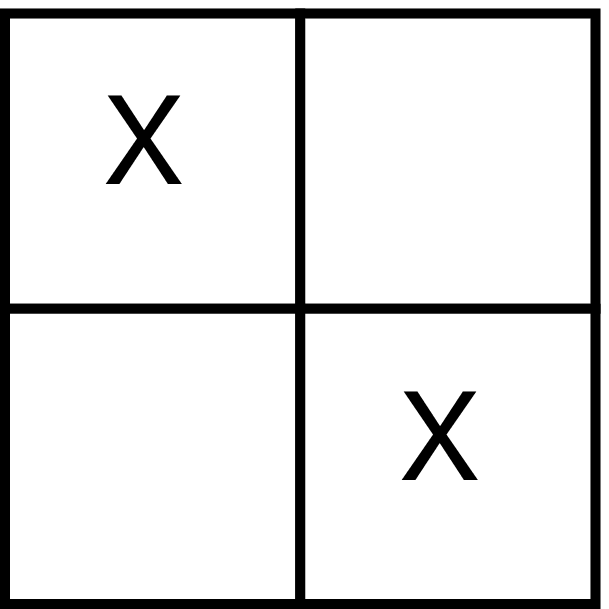} 
\hspace{.54in}
\label{fig:rat}} \qquad
\subfloat[Data set that is not rationalizable.]{
\hspace{.38in}
\includegraphics[scale=0.25]{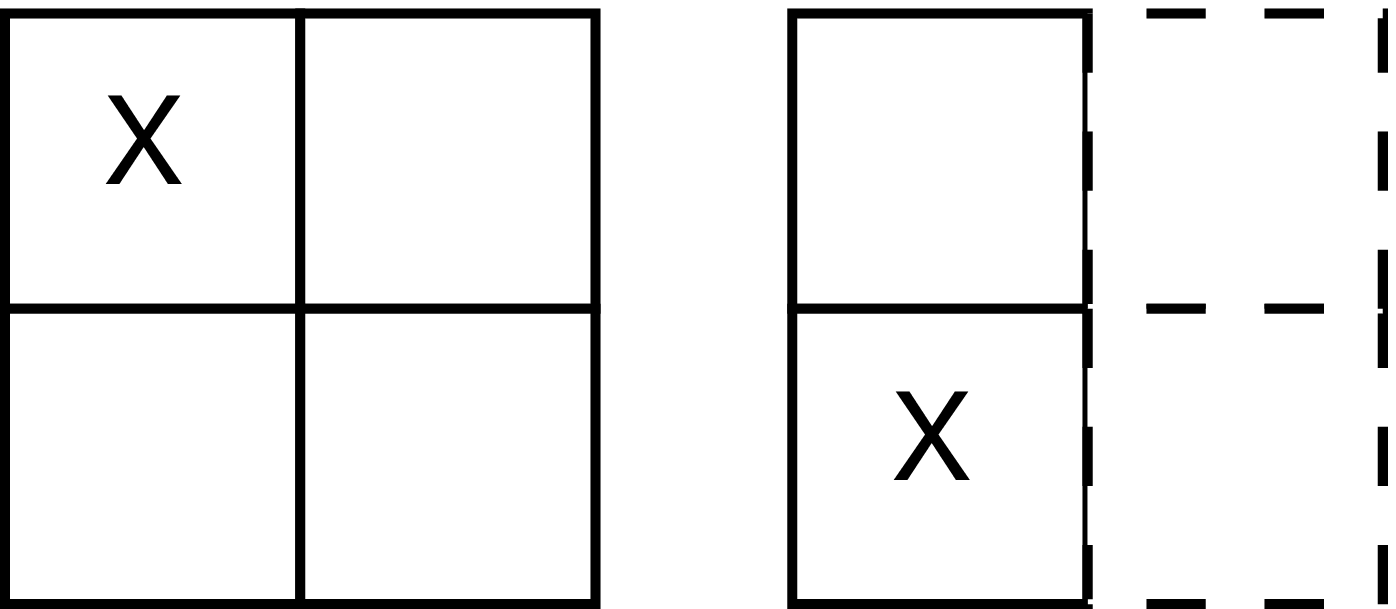}
\hspace{.38in}
\label{fig:nonrat}}
\caption{\emph{Examples of a rationalizable data set and a data set that is not rationalizable.}}
\label{fig:ratnotrat}
\end{center}
\end{figure}

It is important to remark at this point that not all data sets are rationalizable. For example Figure~\ref{fig:ratnotrat} depicts a data set that is rationalizable, and another that is not. Figure~\ref{fig:rat} depicts the data set $T = \{((1,1),\{1,2\},\{1,2\}), ((2,2),\{1,2\},\{1,2\})\}$, rationalizable via
\begin{equation}
A = \left[ \begin{array}{ll}
	2 & 7 \\
	1 & 8
	\end{array} \right] \text{ and }
B = \left[ \begin{array}{ll}
	2 & 1 \\
	7 & 8
	\end{array} \right] \, . \label{e.rankone}
\end{equation}
It is easy to verify that the equilibria are exactly those in the given data set.

In contrast, for a simple example of a game that is not rationalizable consider the data set $T' = \{((1,1),[2],[2]), ((2,1),[2],\{2\})\}$ shown in Figure~\ref{fig:nonrat}. Since all equilibria must be strict, by the first observation, the entry corresponding to the observed equilibria must dominate the entry above it in the row player's payoff matrix. By the second observation, the exact opposite must also be true, which is impossible.

In this paper our goal is not to characterize data sets that are rationalizable. Rather, our goal is to characterize data sets that are rationalizable with \emph{low-rank} bimatrix games, where we use rank as a notion of the structural complexity of the game.

To illustrate the empirical implications of rank briefly, consider again the rationalizable data set in Figure \ref{fig:rat} and the rationalizing payoff matrices in \eqref{e.rankone}.  It is easy to verify that the matrix $C$ has rank one. In fact, this is the lowest rank rationalization possible for this data set. To see that no rank-zero game rationalizes this data set, observe that since the diagonal entries are strict equilibria, each off-diagonal payoff in $A$ must be dominated by the diagonal payoff in the same column, and each off-diagonal payoff in $B$ must be dominated by the diagonal payoff in the same row. Then the sum of the diagonal entries in $C$ must strictly dominate the sum of the off-diagonal entries, which is impossible in a zero-sum game.

\subsection*{Structural properties of data sets}

The previous example highlights the interaction of the structure of data sets with the rank of rationalizing games.  The results in this paper focus on exactly this interaction.

In particular, our results depend on the intersections between subgames in the given data set. Formally, we say that two subgames $(I,J)$ and $(I',J')$ \emph{cross} if $(I \times J) \cap (I' \times J') \neq \emptyset$, but $(I \times J) \not \subseteq (I' \times J')$ and $(I' \times J') \not \subseteq (I \times J)$.  It turns out that when data sets do not contain such crossings, they are in some sense ``simple.'' which motivates the consideration of \emph{laminarity}.

\begin{definition}
\emph{A data set $T$ is \textbf{laminar} if no two subgames in $T$ cross.}
\end{definition}

For more complex data sets that are not laminar, we define the
\emph{crossing set} as the set of all subgames in $T$ that cross some subgame
in $T$. Our next definition is useful in obtaining bounds on the rank
of game that rationalizes a non-laminar data set. Let $\osp_\cross$ be the set of observed choices for subgames in the crossing set.

\begin{definition}
\label{defn:crossing-span}
\emph{For a data set, the \textbf{crossing span} is the minimum of the number of rows and columns spanned by $\osp_\cross$: $\min \left\{  \left|\{ i \mid (i,j) \in \osp_\cross \} \right| \textrm{,  }   \left|\{ j \mid (i,j) \in \osp_\cross \} \right| \right\}.$}
\end{definition}

The crossing span is a natural formalization of the ``richness'' of a data set.  To illustrate the definitions above, Figure~\ref{fig:laminarity} shows a laminar family of subgames, a family where two subgames cross, and a data set (including observed choices) where the crossing span is one.

\begin{figure}[t]
\centering
\subfloat[A laminar family]{ \hspace{.2in}
\includegraphics[scale=0.20]{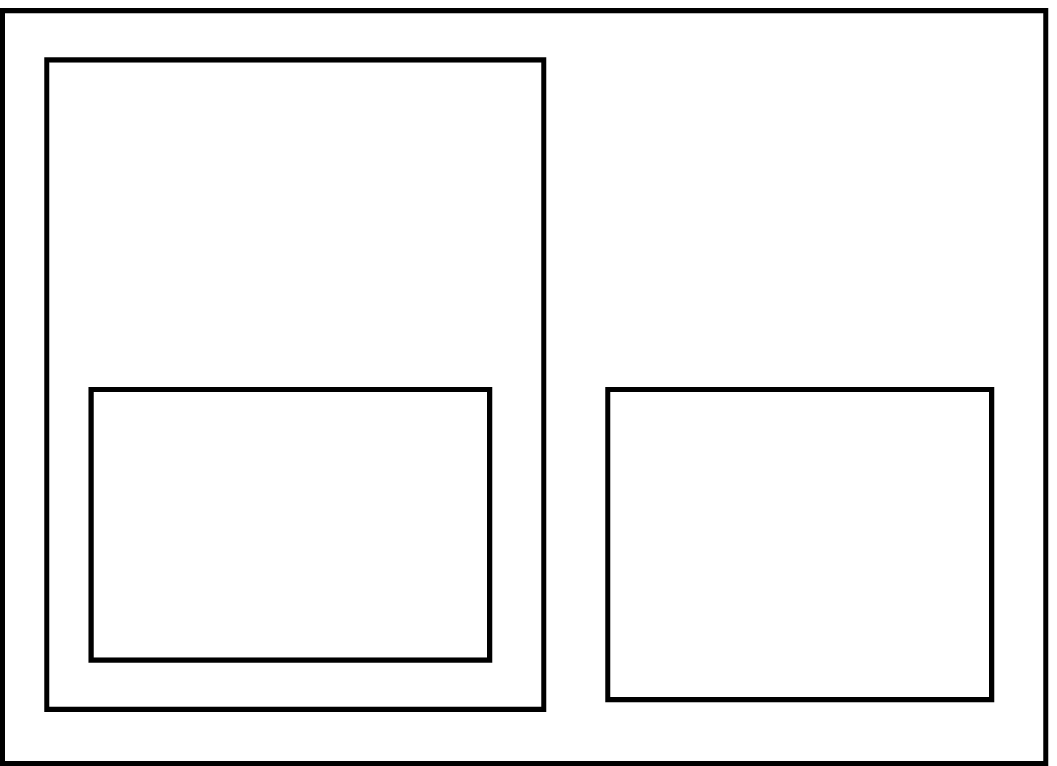} \hspace{.2in}
\label{fig:laminar}}
\qquad
\subfloat[A family where two subgames cross]{  \hspace{.2in}
\includegraphics[scale=0.20]{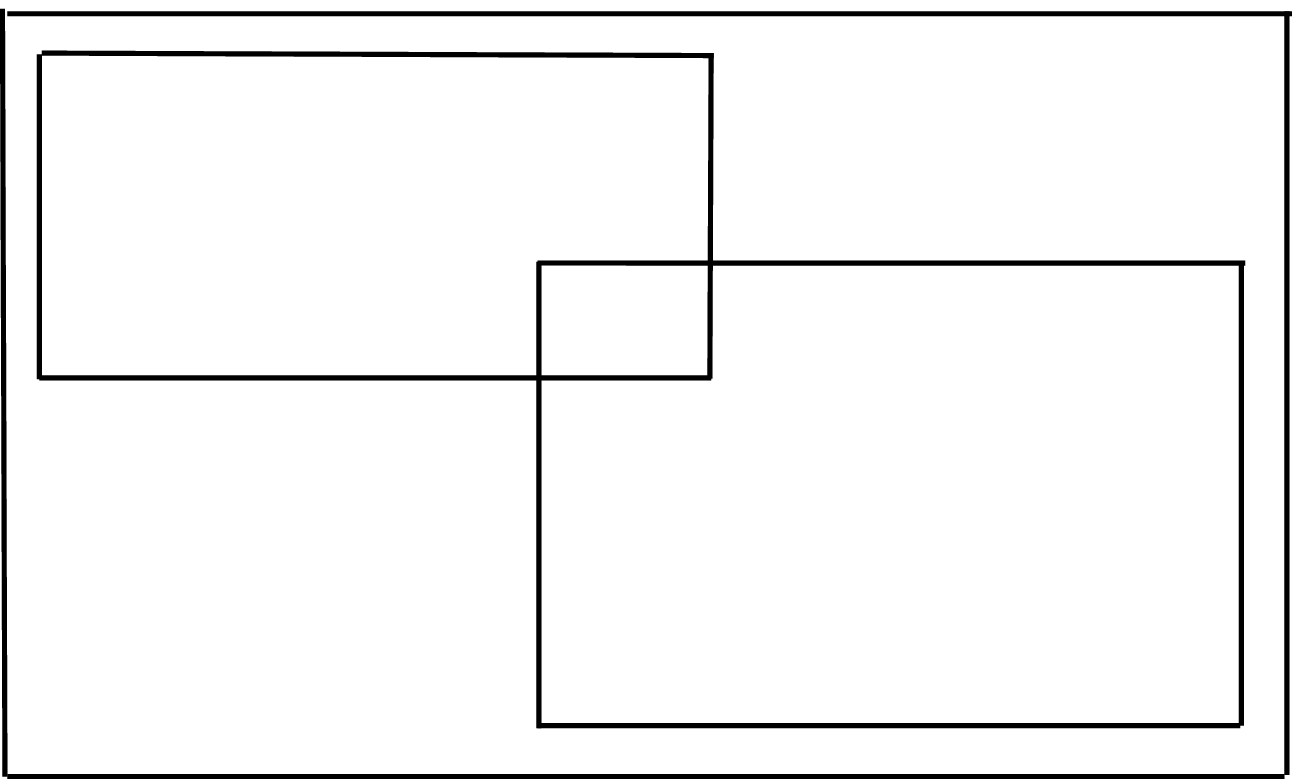}  \hspace{.2in}
\label{fig:intersect}}
\qquad
\subfloat[A family with crossing spans of the row player, column player, and game of $2$, $1$, and $1$]{
\hspace{.5in} \includegraphics[scale=0.20]{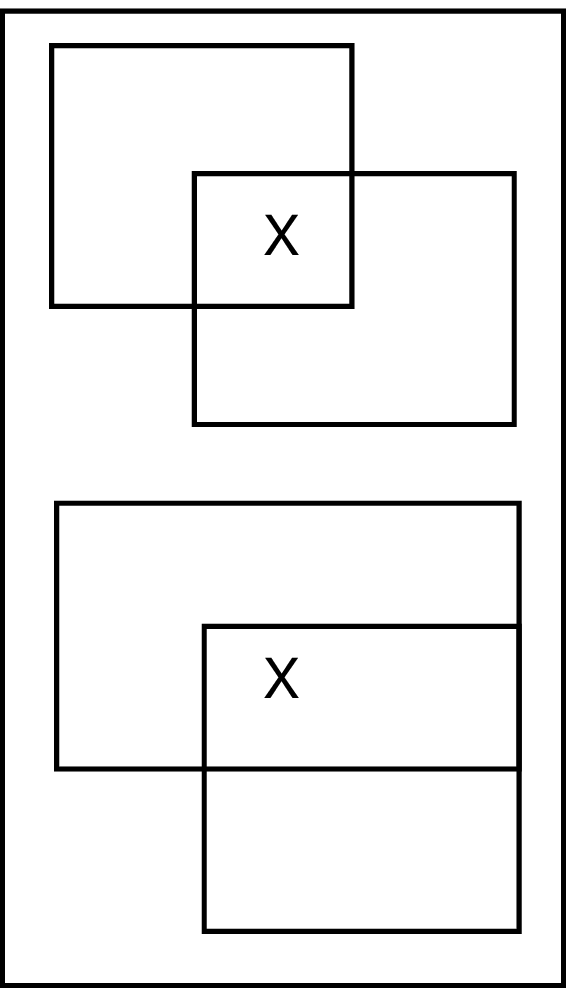} \hspace{.5in}
\label{fig:intersectionspan}}
\caption{\emph{Laminar and crossing families of data sets}}
\label{fig:laminarity}
\end{figure}

In addition to the structural complexity of the data set, the existence of multiple equilibria in a single subgame can influence rank. In fact, there exist (see Section~\ref{sec:high-rank}) data sets that are structurally simple (consist of laminar subgames) but can only be rationalized by high-rank games. Therefore, structural simplicity by itself does not guarantee low-rank rationalizations. The additional constraint on data sets that suffices to establish existence of low-rank games is the following: each subgame in the data set contains a unique equilibrium. Formally, we define uniqueness as follows:

\begin{definition}
\label{defn:unique}
\emph{A rationalizable data set $T$ satisfies the \textbf{uniqueness property} if for any subgame $(I,J)$ in $T$ there is exactly one observed choice $(i,j)$. Further, if $((i,j),I,J), ((i',j'), I', J') \in T $ such that $I' \times J' \subseteq I \times J$ and $(i,j) \in I' \times J'$ then $(i',j') = (i,j)$.}
\end{definition}

\section{Results and Discussion}

The goal of this paper is to test the empirical implications of rank in bimatrix games. 
Toward that end, our first result highlights that the assumption of low rank has empirical bite.  More specifically, the assumption of low rank is refutable via data, i.e., there exist data sets that are rationalizable only via high-rank games.

\begin{theorem}
For all $n$, there exists a rationalizable data set $T$ over an $n \times n$ strategy space such that the rank of any bimatrix game that rationalizes $T$ is $\Omega(\sqrt{n})$.
\label{thm:highrank}
\label{THM:HIGHRANK}
\end{theorem}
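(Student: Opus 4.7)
The plan is to build a data set consisting of $\binom{n}{2}$ two-strategy subgames, one per unordered pair $\{i,j\}$, whose observations encode a $\pm 1$ sign vector $s \in \{-1,+1\}^{\binom{[n]}{2}}$. For each pair, I either list both diagonal entries $(i,i),(j,j)$ or both off-diagonal entries $(i,j),(j,i)$ as observed equilibria in the subgame $\{i,j\} \times \{i,j\}$. Summing the four strict-Nash inequalities pairwise, exactly as in the rank-one example at the end of Section~\ref{sec:prelims}, forces $M_{ij} := C_{ii} + C_{jj} - C_{ij} - C_{ji}$ to have sign $s_{ij}$ in every rationalizing $C = A+B$ (positive in the first case, negative in the second).

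Rationalizability for an arbitrary $s$ follows from an explicit construction: set $A_{ii} = B_{ii} = 1$, and for $i \neq j$ set $A_{ji} = B_{ij} = 0$ when $s_{ij} = +1$ and $A_{ji} = B_{ij} = 2$ when $s_{ij} = -1$. The strict-Nash requirements for the subgame $\{i,j\}$ decouple into single-column constraints on $A$ and single-row constraints on $B$, and are verified entry-by-entry.

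The rank lower bound starts from the identity $M = d\mathbf{1}^T + \mathbf{1}d^T - C - C^T$ with $d = \mathrm{diag}(C)$, giving $\rank(M) \le 2\rank(C) + 2$. Thus it suffices to choose $s$ so that every real matrix whose off-diagonal entries carry the signs $s_{ij}$ has rank $\Omega(\sqrt{n})$. A Warren/Milnor--Thom counting argument produces such an $s$: parameterizing rank-$r$ matrices as $C = UV^T$ with $(U,V) \in \mathbb{R}^{n \times r} \times \mathbb{R}^{n \times r}$, each $M_{ij}$ becomes a quadratic polynomial in $2nr$ variables, so the $\binom{n}{2}$ signs of $M_{ij}$ take at most $(c\,n/r)^{2nr}$ distinct joint values as $(U,V)$ varies; comparing with the $2^{\binom{n}{2}}$ possible sign vectors, the inequality $(cn/r)^{2nr} < 2^{\binom{n}{2}}$ holds whenever $r = o(\sqrt{n})$, so some sign vector $s$ is unachievable at rank $r$. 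Instantiating the above construction with any such $s$ yields the desired data set $T$.

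The main obstacle is the sign-rank lower bound. The extra structure on $M$ (symmetry and a vanishing diagonal) is harmless for the counting argument, since neither cuts into the $(cn/r)^{2nr}$ upper bound on realizable sign patterns; an alternative route is Forster's theorem, which gives an explicit $\Omega(\sqrt{n})$ sign-rank bound for a suitably symmetrized Hadamard pattern and avoids the counting step entirely.
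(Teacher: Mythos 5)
Your proposal is correct, but it reaches the $\Omega(\sqrt{n})$ bound by a genuinely different route than the paper. The paper tiles the grid with the $(n/2)^2$ \emph{disjoint} subgames $\{2i-1,2i\}\times\{2j-1,2j\}$, records the diagonal/off-diagonal choices as an explicit Hadamard sign pattern, compresses $C$ via $L=\bfP_n C \bfP_n^T$ so that $\sign(L)$ equals that Hadamard matrix, and invokes Forster's min-rank bound (Proposition~\ref{prop:hminrank}). You instead take the $\binom{n}{2}$ pairwise-crossing subgames $\{i,j\}\times\{i,j\}$, so the forced object is the symmetric zero-diagonal matrix $M_{ij}=C_{ii}+C_{jj}-C_{ij}-C_{ji}$, and you obtain the hard sign vector $s$ non-constructively by Warren/Milnor--Thom counting rather than exhibiting one. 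Both of your key steps check out: the explicit game rationalizing an arbitrary $s$ works because the strict-Nash constraints really do decouple into single-column comparisons in $A$ and single-row comparisons in $B$, and the count $(cn/r)^{2nr}<2^{\binom{n}{2}}$ for $r=O(\sqrt{n})$ is valid (one small remark: in the counting route the identity $M=d\mathbf{1}^T+\mathbf{1}d^T-C-C^T$ and the bound $\rank(M)\le 2\rank(C)+2$ are superfluous, since you count sign patterns of the polynomials $M_{ij}(U,V)$ with $C=UV^T$ and thereby bound $\rank(C)$ directly; the identity is only needed for your alternative Forster route). The trade-offs are real: the paper's instance is explicit and laminar, which it needs in order to argue that laminarity alone does not force low rank, to build the uniqueness-satisfying variant in Section~\ref{sec:highrank2}, and to calibrate against the crossing-span bound of Theorem~\ref{thm:non-laminar}; your instance is non-explicit and maximally crossing, so it cannot serve those purposes. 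In exchange, your argument is self-contained (no Forster), and the same count remains below $2^{\binom{n}{2}}$ up to $r=\Theta(n/\log n)$, so your construction in fact proves an $\Omega(n/\log n)$ lower bound---strictly stronger than the paper's $\Omega(\sqrt{n})$, which is capped by the best known Hadamard min-rank bounds.
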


Theorem \ref{thm:highrank} highlights that complexity in the structure
of the observed data can manifest in requirements on the rationalizing
game.  This is in contrast to the case of consumer choice theory,
where \cite{echengolovinwierman2011} has shown that whenever observed
choice behavior is
rationalizable, it is rationalizable via an easily-optimizable utility function.
The proof of Theorem~\ref{thm:highrank} in
Section~\ref{sec:high-rank} exhibits a family of data sets that require high
rank rationalizations.  The family of data sets used is
constructed from Hadamard matrices (introduced in~\cite{sylvester1867}), and the $\Omega(\sqrt{n})$ bound is a consequence of the best known lower bound on the min-rank of Hadamard matrices, $\sqrt{n}$~\cite{Hogben11}.

Theorem \ref{thm:highrank} highlights that there exist data sets with structure that require high-rank rationalizing games, and thus motivates the study of data sets with more structure in order to understand when rationalizing games with low rank are possible. In the following we present theorems characterizing classes of data sets that allow low-rank rationalizations. Though simple, the classes of data we consider are natural and relevant for typical experimental designs in economics.

First, we focus on the case when the observed data does not involve subgames, i.e., the observed data consists of repeated observations of choices given fixed strategic options for the players.  In this case, we prove that a rank-one rationalization is always possible, regardless of the number of observations in the data. As shown by Figure \ref{fig:rat}, zero-sum rationalizations are not always feasible for such data sets, and so this result is tight.

\begin{theorem}
For every rationalizable data set of the form $T=\{ ((i,j), [n], [n] ) \}_{i,j}$ there exists a rationalizing  rank-one bimatrix game.
\label{thm:rankone}
\label{THm:RANKONE}
\end{theorem}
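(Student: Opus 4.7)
\noindent \textbf{Proof plan for Theorem \ref{thm:rankone}.}
The plan is to first extract the combinatorial structure of the observations forced by rationalizability, and then exhibit an explicit parametric family of rank-one decompositions that realizes this structure.

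First I would note that since the only subgame appearing in $T$ is the full game $([n],[n])$, any two distinct observations $((i_1,j_1),[n],[n])$ and $((i_2,j_2),[n],[n])$ in $T$ must satisfy $i_1 \neq i_2$ and $j_1 \neq j_2$: otherwise the strict inequalities $A_{i_1 j} > A_{i_2 j}$ and $A_{i_2 j} > A_{i_1 j}$ (or the analogous pair for $B$) would contradict each other. Hence the observed choices form a partial matching between rows and columns, and after relabeling I may assume the observations are exactly $(l,l)$ for $l = 1,\ldots,k$ with $k \leq n$.

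The construction I would use is the symmetric ansatz $B = A^{T}$, combined with $A = \tfrac{1}{2} w w^{T} + \Delta$ where $w = (w_1,\ldots,w_n)$ is a vector of distinct positive reals and $\Delta$ is antisymmetric (so $\Delta_{ii}=0$ and $\Delta_{li} = -\Delta_{il}$). Then automatically $A + B = A + A^{T} = w w^{T}$, which has rank one, so the rank requirement is built in. Moreover, because $B = A^{T}$, the two strict Nash conditions for an observation $(l,l)$ collapse to the single condition $A_{ll} > A_{il}$ for every $i \neq l$. Substituting the ansatz reduces this to the linear inequality
\[
\Delta_{il} < \tfrac{1}{2} w_l (w_l - w_i), \qquad l \in [k],\ i \neq l.
\]

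The remaining task is to show that an antisymmetric $\Delta$ satisfying all these inequalities exists. For a pair $\{i,l\}$ with both $i,l \in [k]$, the two inequalities at $(l,l)$ and $(i,i)$ combined with $\Delta_{li} = -\Delta_{il}$ produce the two-sided window
\[
\tfrac{1}{2} w_i (w_l - w_i) < \Delta_{il} < \tfrac{1}{2} w_l (w_l - w_i),
\]
whose length is $\tfrac{1}{2}(w_l - w_i)^{2} > 0$ because the $w$'s are chosen distinct; so this window is always nonempty. For a pair with only one index in $[k]$, only a single one-sided inequality applies and is trivially satisfiable, and for pairs with neither index in $[k]$ we may set $\Delta_{il}$ arbitrarily (subject to antisymmetry). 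Any choice of $\Delta$ in these windows, together with $B := A^{T}$, then gives a rank-one bimatrix game rationalizing $T$.

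The main obstacle is choosing an ansatz that simultaneously enforces rank one and the strict-max conditions at $k$ different diagonal cells, since the analogous rank-one matrix $A$ alone has a single column-maximizer and so cannot directly accommodate multiple equilibria. The symmetric decomposition $B = A^{T}$ with an antisymmetric perturbation of $\tfrac{1}{2} w w^{T}$ resolves this: it fuses the row- and column-player strictness conditions into one family, and the antisymmetry of $\Delta$ turns the resulting system into independent pairwise interval constraints whose feasibility follows purely from the partial-matching structure of the observations.
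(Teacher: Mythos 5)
Your proposal is correct, and it takes a genuinely more structural route than the paper. The paper's proof makes the same opening move (component-wise distinctness of observations, relabel to the diagonal) but then writes down one explicit game: $A_{ij}=2ij-i^{2}+j^{2}$, $B_{ij}=2ij+i^{2}-j^{2}$ on the relevant block, and checks directly that each column of $A$ (row of $B$) is strictly maximized on the diagonal. That explicit game is in fact an instance of your ansatz: it has $B=A^{T}$, $\tfrac{1}{2}ww^{T}$ with $w_i=2i$, and the antisymmetric perturbation $\Delta_{ij}=j^{2}-i^{2}$, which indeed lies in your interval $\bigl(\tfrac{1}{2}w_i(w_l-w_i),\ \tfrac{1}{2}w_l(w_l-w_i)\bigr)$. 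What your version buys is an explanation of \emph{why} such a game exists --- the collapse of the two strictness conditions under $B=A^{T}$ and the nonempty pairwise windows of length $\tfrac{1}{2}(w_l-w_i)^{2}$ --- and it exhibits the whole feasible family rather than one member. What the paper's version buys is a stronger conclusion that it uses in the discussion but does not state in the theorem: the constructed game has \emph{no} equilibria other than the observations (verified there by showing every profile in $\{\ell+1,\dots,n\}^{2}$ admits a profitable deviation). Your construction does not address exactness, and with $B=A^{T}$ an off-diagonal profile $(i,j)$ could a priori be an equilibrium if columns $i$ and $j$ of $A$ were maximized at rows $j$ and $i$ respectively; this is not required by the theorem as stated, but if you want to match the paper's remark about the rationalization having exactly the observed equilibria, you would need an extra argument or a specific choice of $\Delta$ on the unconstrained block.
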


In contrast to Theorem~\ref{thm:highrank}, Theorem~\ref{thm:rankone} shows that the assumption of low rank does not have empirical bite for this class of data sets. 
So, the data is not ``rich'' enough to reject low-rank without also rejecting rationality.  Thus, for this class of data, the message is similar to that of \cite{echengolovinwierman2011} for consumer choice theory.  The proof of Theorem~\ref{thm:rankone}, given in Section~\ref{sec:entire-game}, proceeds by constructing explicit payoffs for the players that are maximized at the given observations. The observations thus correspond to strict equilibria in the game. The sum of the payoffs can be described by the outer product of two vectors, and hence the game is of rank one.  Importantly, the construction ensures that the subgame equilibria in the rank-one rationalization are \emph{exactly} the observations in the data set, so the simplicity of the game is not a result of adding equilibria that were not observed.

Our remaining results move away from assumption of fixed strategic choices for the players, and again consider data coming from observations of choices when playing subgames.  In this context, we make two assumptions about the data sets that serve to simplify their structure and guarantee low-rank rationalizations are possible: (i) we assume that the data set satisfies the uniqueness property; and (ii) the observed subgames are laminar.  These two assumptions are motivated by the construction of the proof of Theorem \ref{thm:highrank}, which highlights that high-rank rationalizations require cyclic ``revealed preferences'' that can result from overlapping subgames with multiple observations.  Given these two assumptions about the data set, we prove that the data can be rationalized as a zero-sum game.

\begin{theorem}
If a data set is laminar and satisfies the uniqueness property then it can be rationalized by a zero-sum game. 
\label{thm:laminar}
\label{THM:LAMINAR}
\end{theorem}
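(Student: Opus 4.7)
The plan is to build, by induction on the laminar forest of subgames in $T$, an $n \times n$ matrix $A$ and set $B = -A$. Rationalizing $T$ with this zero-sum game reduces to showing that, for every observation $((i^*,j^*),I,J) \in T$, the value $A_{i^* j^*}$ is strictly the maximum of column $j^*$ restricted to rows $I$ and strictly the minimum of row $i^*$ restricted to columns $J$.

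First I would organize $T$ into a forest in which the parent of a subgame is the smallest subgame of $T$ properly containing it; laminarity makes this well-defined and forces siblings to be pairwise disjoint as sets of cells. The construction then runs independently on each root subtree, and cells of $A$ lying outside every subgame of $T$ can be set to $0$.

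For the inductive step on a subtree rooted at $S = (I,J)$ with observation $(i^*, j^*)$, I would first recursively build $A|_{I_\ell \times J_\ell}$ for every child $S_\ell = (I_\ell, J_\ell)$. The crucial freedom is that translating the entire block $A|_{I_\ell \times J_\ell}$ by any additive constant preserves every saddle-point inequality inside $S_\ell$'s subtree, so I can shift each child's block independently. I would use this freedom to enforce the row-$i^*$ and column-$j^*$ demands of $S$ itself. If $(i^*, j^*)$ lies in some child, disjointness makes the child unique, and the uniqueness property forces that child's observation to equal $(i^*, j^*)$; the inductive hypothesis already gives the required strict inequalities in column $j^*$ and row $i^*$ inside that child, and I shift so that $A_{i^* j^*} = 0$. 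For every other child $S_{\ell'}$ I would shift its block downward if $j^* \in J_{\ell'}$ until $\max_{i \in I_{\ell'}} A_{i, j^*} < 0$, and shift it upward if $i^* \in I_{\ell'}$ until $\min_{j \in J_{\ell'}} A_{i^*, j} > 0$. Finally I would fill the shell of $S$ (cells of $I \times J$ in no child) so that column-$j^*$ shell entries are strictly negative, row-$i^*$ shell entries are strictly positive, $A_{i^* j^*} = 0$ if it is a shell cell, and the rest are $0$.

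The main obstacle is to verify that these shift prescriptions are compatible and sufficient. Disjointness of children makes the individual translations independent, and the only way a single block $S_{\ell'}$ could be subject to both a column-$j^*$ shift and a row-$i^*$ shift is if $i^* \in I_{\ell'}$ and $j^* \in J_{\ell'}$ simultaneously, which puts $(i^*, j^*) \in I_{\ell'} \times J_{\ell'}$ and identifies $S_{\ell'}$ with the unique child already treated separately, so no conflict arises. With all shifts and shell entries in place, every cell $(i, j^*)$ with $i \in I \setminus \{i^*\}$ lives either in a downward-shifted child block or in the shell, hence is strictly below $A_{i^* j^*} = 0$; the symmetric argument handles row $i^*$, completing the saddle-point verification at $S$ and, by induction, throughout $T$.
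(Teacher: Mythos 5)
Your proof is correct, but it takes a genuinely different route from the paper. The paper works through \emph{revealed-preference graphs}: it encodes the required payoff inequalities as a directed graph on strategy profiles, shows (Lemma~\ref{lem:rpzerosum}) that an acyclic graph implementing $T$ yields a zero-sum rationalization via a topological ordering, and then inductively constructs, on the height of the laminar tree, a graph that \emph{strongly} implements $T$ and is acyclic (Claims~\ref{clm:baseacyclic} and~\ref{clm:inductacyclic}). You instead build the matrix $A$ directly by induction on the laminar forest, exploiting the observation that translating a child's block $A|_{I_\ell \times J_\ell}$ by an additive constant preserves every saddle-point inequality internal to that child's subtree; disjointness of siblings makes the translations independent, and the uniqueness property resolves the one child that contains the parent's observed choice. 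This is sound: the only delicate points --- that a block needing both a downward column-$j^*$ shift and an upward row-$i^*$ shift must be the already-handled child containing $(i^*,j^*)$, and that nested subgames with a common observed cell are forced by uniqueness to agree --- are both addressed. Your argument is more elementary and makes transparent exactly where laminarity is used (translation independence of disjoint blocks). What the paper's heavier machinery buys is twofold: the graph framework is reused nearly verbatim for Theorem~\ref{thm:non-laminar} via split revealed-preference graphs, and ``strong implementation'' delivers the stronger conclusion (noted in the text and in the footnote comparing Theorems~\ref{thm:laminar} and~\ref{thm:non-laminar}) that the observed choices are the \emph{only} equilibria of their subgames --- a property your construction does not claim and would need additional bookkeeping to certify, though it is not required by the theorem statement itself.
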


Like the case of Theorem \ref{thm:rankone}, Theorem \ref{thm:laminar} states that the assumption of low rank has no empirical bite if the data has a simple structure, i.e., laminar subgames with one observation per subgame.   Zero-sum games are arguably the most intuitive and simple noncooperative games, and it is remarkable that laminar data sets with the uniqueness property, which can conceivably possess a rich structure, can be rationalized by zero-sum games. Our result hence shows that the NP-hardness in the result of~\cite{galambos} can be traced to non-laminar sets of alternatives. The proof of Theorem~\ref{thm:laminar} in Section~\ref{sec:laminar} is based on \emph{revealed-preference graphs}: graphs which
capture the inequalities that must exist between payoffs for the
observed data to be rationalizable. These graphs are a natural component of revealed-preference analysis. In our case,
acyclicity of the revealed-preference graph exactly
corresponds to rationalizability by zero-sum games. We prove that the
revealed-preference graph constructed from laminar data sets with one
observation per subgame are necessarily acyclic. Hence the data
set is rationalizable by a zero-sum game.  As in the case of Theorem \ref{thm:rankone}, structural simplicity is not a result of the rationalization having subgame equilibria that were not observed.

The identification of laminarity as a structural property ensures that rationalization via zero-sum games motivates the study of data sets that are ``nearly laminar''.  One would hope that ``near laminarity'' would lead to rationalization via low-rank games, and this is indeed the case.  In particular, the following theorem highlights that a data set is rationalizable via a game having low rank when the crossing span of the subgames in the data set is small.

\begin{theorem}
Any rationalizable data set $T$ that satisfies the uniqueness property
can be rationalized by a bimatrix game of rank at most the crossing
span of $T$.
\label{thm:non-laminar}
\end{theorem}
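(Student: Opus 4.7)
The plan is to extend the construction of Theorem \ref{thm:laminar}, treating the crossing subgames as perturbations of an underlying laminar rationalization. Let $k$ be the crossing span of $T$ and, without loss of generality, assume the minimum in Definition \ref{defn:crossing-span} is attained by rows, so that the observed choices in the crossing set $\osp_\cross$ involve only $k$ distinct row-strategies, which I call $i_1, \ldots, i_k$. The overall strategy is to build a zero-sum rationalization for the ``laminar skeleton'' of $T$, and then add exactly one rank-$1$ matrix per crossing row to repair the row player's best-response inequalities on $\osp_\cross$. The resulting game will have $A+B$ equal to a sum of $k$ rank-$1$ matrices, hence rank at most $k$.

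First, I would isolate the laminar part of $T$. Because of the uniqueness property, any two observations whose subgames are nested agree on the common strategy profile, so the revealed-preference graph used in the proof of Theorem \ref{thm:laminar} is well-defined on all of $T$. Restricting attention to the subgames outside the crossing set yields a laminar family, and I would apply the construction of Theorem \ref{thm:laminar} to obtain a zero-sum game $(A_0, -A_0)$ that rationalizes this laminar sub-collection. At this stage $C = A_0 + (-A_0) = 0$, so the rank starts at $0$.

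Next, I would introduce the rank-$1$ corrections. For each crossing row $i_l$ define $M_l = e_{i_l} v_l^\top$, a matrix whose only nonzero row is row $i_l$. I then set $A = A_0 + \sum_{l=1}^{k} M_l$ and $B = -A_0$, so that $C = A + B = \sum_l M_l$ has rank at most $k$. The column player's inequalities depend only on $B$, so they are inherited directly from the laminar construction and remain strict even on crossing subgames (the uniqueness property guarantees that nothing is added along any column within a fixed row that violates a column-player constraint, since $M_l$ is constant along column $j$ up to choice of $v_l(j)$). The row player's inequalities in the laminar subgames can be preserved by choosing each $v_l$ with sufficient slack, adjusted top-down along the inclusion order of subgames, so that the added bonus to row $i_l$ is dominated by the existing gap in $A_0$ wherever no crossing constraint is active. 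The remaining task is to pick the entries of $v_l$ large enough at the observed columns of crossing subgames containing $i_l$ to force $i_l$ to be the strict best response there.

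The main obstacle is the sequencing of these corrections: when an observation $((i_l, j), I, J) \in \osp_\cross$ requires $A_{i_l j} > A_{i' j}$ for some alternative $i' = i_{l'}$ that is itself a crossing row, both $M_l$ and $M_{l'}$ move row-$i_l$ and row-$i'$ entries in column $j$ independently, so the corrections can genuinely conflict. I would resolve this by choosing the $v_l$ iteratively in an order consistent with the partial order induced by subgame inclusion on the crossing subgames (which is well-defined thanks to the uniqueness property), at each step giving row $i_l$ a bonus large enough to dominate previously fixed entries in the relevant columns, while keeping $v_l$ bounded enough on ``laminar columns'' not to overwhelm the slack built into $A_0$. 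The verification that such choices exist is the heart of the argument and relies on the fact that, restricted to crossing subgames, each row $i_l$ participates in a tree-structured family of best-response constraints (since within any single crossing subgame the uniqueness property pins down a unique observed choice), so the required inequalities on the entries of $v_l$ reduce to a feasibility problem with a triangular structure that admits a solution by induction on subgame depth.
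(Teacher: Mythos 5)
Your high-level intuition is right and matches the paper's: the sum $C=A+B$ should be supported only on the $k$ rows touched by $\osp_\cross$, which immediately bounds its rank by the crossing span. But the execution has two genuine gaps. First, by setting $B=-A_0$ where $(A_0,-A_0)$ rationalizes only the laminar sub-collection, you never establish the \emph{column player's} best-response inequalities at the crossing observations: for $((i_l,j),I,J)\in\osp_\cross$ you need $B_{i_l j}>B_{i_l j'}$ for all $j'\in J\setminus\{j\}$, and nothing in your construction produces these, since the corrections $M_l$ modify only $A$. Your remark that the column constraints are ``inherited'' is only true for observations in the laminar skeleton. Repairing this forces you to perturb $B$ on the crossing rows as well, and then $A_{i_l j}$ and $B_{i_l j}$ must be chosen independently at those entries --- which is exactly the ``split vertex'' device the paper uses: at an observed choice of a crossing subgame the single vertex $(i,j)$ is replaced by a row vertex controlling $A_{ij}$ and a column vertex controlling $B_{ij}$, while intact vertices keep $B_{ij}=-A_{ij}$.

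Second, the step you yourself flag as ``the heart of the argument'' --- that the inequalities on the $v_l$ admit a solution because the constraints are ``tree-structured'' with a ``triangular'' feasibility system --- is asserted rather than proved, and it is not obvious: crossing subgames can overlap one another and laminar subgames in ways that a priori create cyclic chains of required inequalities among entries of $A$ (and your two-phase scheme makes this worse, because entries of $A_0$ in non-crossing rows are frozen before the crossing constraints are even considered, so a crossing observation may demand $A_{i_l j}$ exceed an already-fixed $(A_0)_{i'j}$ that some laminar observation requires to dominate $A_{i_l j}$). In the paper this is precisely the content of the acyclicity lemma for the split revealed-preference graph: one shows, by an induction using laminarity of the non-crossing subgames together with the uniqueness property, that any path through split vertices can be short-circuited by an edge between intact vertices, so a cycle in the full graph would yield a cycle in the laminar part, a contradiction; a single global topological order of the split graph then produces all entries simultaneously. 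Without an argument of this kind your feasibility claim is unsupported, and with it you have essentially reconstructed the paper's proof.
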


Theorem \ref{thm:non-laminar} provides insight into the connection
between the ``richness'' of the observed data (formalized via the
crossing span) and the ``structural complexity'' of rationalizing
games (formalized via the rank).   In particular, it highlights that
the empirical bite of the assumption of low rank is determined by the
crossing span of the data.  That is, data can be used to refute the
assumption of low rank when the number of observed subgames that cross
each other is high enough. The example we construct using Hadamard
matrices for the proof of Theorem~\ref{thm:highrank} has crossing span
$\Theta(n)$, while in the laminar case of Theorem~\ref{thm:laminar},
the crossing span is zero\footnote{Note that, Theorem~\ref{thm:laminar} is stronger than Theorem~\ref{thm:non-laminar} for laminar data sets; since it guarantees rationalizability by a game in which the observed choices are the only equilibria.}.  Thus, Theorem \ref{thm:non-laminar} yields
an $O(n)$ upper bound on the rank of the examples used in the proof of
Theorem \ref{thm:highrank}.  The gap between this upper bound and the
$\Omega(\sqrt{n})$ lower bound in Theorem \ref{thm:highrank}
corresponds to the gap between the best known bounds on the rank of
Hadamard matrices \cite{Hogben11}.

Theorem~\ref{thm:non-laminar} is proven in
Section~\ref{sec:non-laminar}. The proof proceeds through
revealed-preference graphs, similar the proof of
Theorem~\ref{thm:laminar}. However the intersections in the data set
now introduce cycles in the graph. We break the cycles by introducing
additional vertices in the graph. The number of
additional vertices is an upper bound on the rank required for
rationalizing the data set, and is closely related to the crossing
span.  Note that, unlike Theorems \ref{thm:rankone} and \ref{thm:laminar}, the construction does not  ensure uniqueness of equilibria in the subgames.  This is because, even if the data set satisfies uniqueness, if it is not laminar there may be no rationalizing game with unique equilibria in every subgame (see Figure~\ref{fig:highreplace}).

\section{Proof of Theorem~\ref{thm:highrank}}
\label{sec:high-rank}

Our proof of Theorem~\ref{thm:highrank} proceeds by explicitly constructing data sets that are rationalizable only by games of rank $\Omega(\sqrt{n})$. In particular, we construct two such data sets. These two data sets are structurally similar, i.e., they enforce the same set of constraints on the payoffs of the players; however, the data sets have different structural properties. One of these data sets is laminar and includes two observations for every subgame in the data set.  The other data set has many intersections among the observations, but satisfies the uniqueness property.  Thus, we prove that both non-laminarity and non-uniqueness of equilibria are sufficient individually to enforce high rank of the rationalizing game.

Throughout, we assume without loss of generality that $n$ is a power of 2. We start by constructing a laminar data set where any rationalizing game has rank $\Omega(\sqrt{n})$, proving Theorem~\ref{thm:highrank}. Then, we show how the same lower bound can be obtained by a data set that satisfies the uniqueness property but has intersections between subgames.

\subsection{High Rank in Laminar Data Sets}
\label{sec:highrank}

We start by  defining a particular class of data sets used in our construction.

\begin{definition} \emph{A data set $T$ on an $n \times n$ game is \textbf{2-regular} if:
\begin{enumerate}
\item Subgame $(A',B')$ appears in $T$ iff $A' = \{2i-1,2i\}$ and $B' = \{2j-1,2j\}$ for $i,j \in [n/2]$;
\item For each subgame $(\{2i-1,2i\},\{2j-1,2j\})$, either $((2i-1,2j-1),\{2i-1,2i\},\{2j-1,2j\} ) \in T$ and $((2i,2j),\{2i-1,2i\},\{2j-1,2j\}) \in T$, or $((2i-1,2j),\{2i-1,2i\},\{2j-1,2j\}) \in T$ and $((2i,2j-1),\{2i-1,2i\},\{2j-1,2j\}) \in T$, but not both.
\end{enumerate}
}
\end{definition}

Two-regular data sets have a very intuitive graphical representation, and one such data set is illustrated in Figure \ref{fig:2regular}. Each $2\times2$ subgame with contiguous rows and columns that starts on even row and column indices appears exactly twice in the data set. The two entries for each subgame correspond to either the diagonal elements of the subgame, or the off-diagonal elements. 

\begin{figure}[t]
\centering
\includegraphics[scale=.53]{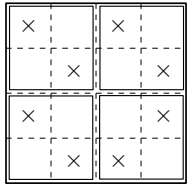}
\caption{\emph{Example of a two-regular data set.}} \label{fig:2regular}
\end{figure}

Next, we associate a sign-pattern matrix with a 2-regular data set. If the data set is in a game of size $n \times n$, the associated sign-pattern matrix is of size $n/2 \times n/2$, and is obtained by replacing each subgame by +1 or -1 depending on whether the entries in the data set for the subgame correspond to diagonal or off-diagonal entries. Thus, for the subgame $(\{2i-1,2i\},\{2j-1,2j\})$,

\begin{enumerate}
\item If $(2i-1,2j-1)$ and $(2i,2j)$ are the observed choices, replace the subgame by +1.
\item If $(2i,2j-1)$ and $(2i-1,2j)$ are the observed choices, replace the subgame by -1.
\end{enumerate}

\noindent For example, the sign-pattern matrix for the example game in Figure \ref{fig:2regular} is 

\[
\left[ \begin{array}{rr}
				+1 & +1 \\
				+1 & -1
				\end{array} \right] \,  .
\]

The key idea behind our analysis is a  correspondence between the sign-pattern matrices described above and \emph{Hadamard matrices} \cite{Hogben11,sylvester1867}, which are square matrices in which every entry is either $+1$ or $-1$ and whose rows are mutually orthogonal.  For example, a Hadamard matrix of order $2$ is

\[
\bfH_2   = \left[ \begin{array}{rr}
				+1 & +1 \\
				+1 & -1
				\end{array} \right] \, .
\]

\noindent The sign-pattern matrix for the 2-regular bimatrix game in Figure \ref{fig:2regular} is exactly $\bfH_2$. We focus on Hadamard matrices with order $2^k$, $k \in \mathbb{Z}^+$ since we assume $n$ is a power of 2. Given a Hadamard matrix of order $2^{k-1}$, a Hadamard matrix of order $2^k$ is 

\[
\bfH_{2^k}   = \left[ \begin{array}{rr}
				\bfH_{2^{k-1}} & \bfH_{2^{k-1}} \\
				\bfH_{2^{k-1}} & -\bfH_{2^{k-1}}
				\end{array} \right] \, .
\]

The \emph{min-rank} of a Hadamard matrix of order $n$ is the minimum rank over all  matrices $\bfX \in \mathbb{R}^{n \times n}$ with $\sign({\bfX}) = \bfH_n$, where $\sign(A)$ for any matrix is obtained by replacing each entry $a_{ij}$ by $\sign(a_{ij}) \in \{-1,0,1\}$. A crucial result for our analysis is:

\begin{proposition}[\cite{Forster02,Hogben11}]
The min-rank of $\bfH_n$ is at least $\sqrt{n}$.
\label{prop:hminrank}
\end{proposition}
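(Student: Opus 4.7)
The plan is to invoke Forster's sign-rank lower bound, for which Hadamard matrices sit at the extremal case. Forster's theorem states that for any $\pm 1$ matrix $M \in \{-1,+1\}^{n \times n}$, every real matrix $\bfX$ with $\sign(\bfX) = M$ satisfies
\[
\rank(\bfX) \;\geq\; \frac{n}{\|M\|_{op}},
\]
where $\|M\|_{op}$ denotes the operator (spectral) norm. The proposition is then a one-line corollary: since $\bfH_n \bfH_n^T = n I_n$, every singular value of $\bfH_n$ equals $\sqrt{n}$, so $\|\bfH_n\|_{op} = \sqrt{n}$, and hence any $\bfX$ with $\sign(\bfX) = \bfH_n$ has $\rank(\bfX) \geq n/\sqrt{n} = \sqrt{n}$, which is exactly the claimed min-rank bound.

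The real content is Forster's theorem itself, which I would sketch at a high level. Given a rank-$r$ factorization $\bfX = U V^T$ with $U, V \in \mathbb{R}^{n \times r}$ and $\sign(\bfX) = M$, the key move is to apply sign-preserving transformations---namely, scaling rows of $U$ by positive scalars (which only rescales rows of $\bfX$ and preserves signs) and invertible changes of basis $U \mapsto U A$, $V \mapsto V A^{-T}$ (which leave $\bfX$ entirely unchanged)---so as to reduce to the normalized case where the rows $u_i$ of $U$ are in isotropic position, $\sum_i u_i u_i^T = (n/r) I_r$. One then estimates the bilinear quantity $\sum_{i,j} M_{ij} \bfX_{ij} = \sum_{i,j} |\bfX_{ij}|$ from below, using a further normalization so that the minimum absolute entry is at least $1$, and from above using Cauchy--Schwarz together with the operator-norm bound $\|MV\|_F \leq \|M\|_{op} \|V\|_F$ and the isotropy of the $u_i$. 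Comparing the two estimates forces $r \geq n/\|M\|_{op}$.

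The main obstacle is making the isotropic-position reduction rigorous: existence of a combined sign-preserving transformation that normalizes $U$ in this sense is a nontrivial compactness/fixed-point statement, not a routine linear-algebra step. Because the theorem is well established and the Hadamard application is completely standard, for the present proposition I would cite Forster's paper for the isotropic reduction and Cauchy--Schwarz estimate, and present only the operator-norm computation $\|\bfH_n\|_{op} = \sqrt{n}$, followed by the one-line corollary $\rank(\bfX) \geq n/\sqrt{n} = \sqrt{n}$, as the main body of the proof of Proposition~\ref{prop:hminrank}.
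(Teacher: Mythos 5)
The paper gives no proof of this proposition; it is imported verbatim from the cited references, and Forster's spectral-norm bound $\rank(\bfX) \geq n/\|M\|_{op}$ together with the computation $\|\bfH_n\|_{op} = \sqrt{n}$ (from $\bfH_n\bfH_n^T = nI_n$) is precisely the content of that citation. Your derivation is correct and is essentially the same route the paper intends, so citing Forster for the isotropic-position argument and presenting only the Hadamard specialization is exactly appropriate here.
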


Given the above proposition, the key lemma we use to establish Theorem \ref{thm:highrank} is the following, which shows that for 2-regular data sets where the sign-pattern matrix corresponds to a Hadamard matrix, the rank of any game that rationalizes the data set is $\Omega(\sqrt{n})$.

\begin{lemma}
Let $T$ be a 2-regular data set for which the sign-pattern matrix is $\bfH_n$. Let $(A,B)$ be any game that rationalizes $T$, and $C = A+B$. Then $\rank(\bfC) \ge \sqrt{n}$.
\label{lemma:highrank}
\end{lemma}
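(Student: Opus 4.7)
The plan is to translate the strict-Nash rationalizability conditions at the observed equilibria into a statement about the $2\times 2$ ``discrepancies'' of $C$, and then to realize the sign-pattern matrix $\bfH_n$ as the sign of a matrix whose rank is bounded above by $\rank(C)$. Proposition~\ref{prop:hminrank} will then immediately deliver the $\sqrt{n}$ lower bound.

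Step 1: Fix a subgame $(\{2i-1,2i\},\{2j-1,2j\})$ and write down the four strict inequalities that define pure strict Nash at the two observed choices. If the diagonal entries $(2i-1,2j-1)$ and $(2i,2j)$ are observed, then adding the two column-deviation constraints on $A$ and the two row-deviation constraints on $B$, and summing, yields
\[
C_{2i-1,2j-1}+C_{2i,2j} \;>\; C_{2i-1,2j}+C_{2i,2j-1}.
\]
If instead the off-diagonal entries are observed, the analogous summation gives the reverse strict inequality. Hence, defining
\[
D_{ij} \;:=\; C_{2i-1,2j-1}+C_{2i,2j}-C_{2i-1,2j}-C_{2i,2j-1},
\]
we obtain $\sign(D_{ij})=(\bfH_n)_{ij}$ for every $i,j$, i.e.\ $\sign(D)=\bfH_n$.

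Step 2: Exhibit $D$ as a product of $C$ with fixed ``differencing'' matrices. Let $\bfP,\bfQ\in\mathbb{R}^{(n/2)\times n}$ be defined by $\bfP_{i,2i-1}=1$, $\bfP_{i,2i}=-1$ (all other entries zero), and the same pattern for $\bfQ$. A routine index computation shows $D=\bfP\, C\, \bfQ^{T}$. Therefore
\[
\rank(D) \;\le\; \rank(C).
\]

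Step 3: Combine with the known min-rank lower bound. Since $\sign(D)=\bfH_n$, by Proposition~\ref{prop:hminrank} we get $\rank(D)\ge\sqrt{n}$. Hence $\rank(C)\ge\sqrt{n}$, completing the proof.

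The only nontrivial step is Step 1, and even there the work is purely bookkeeping: one has to be careful that the summed row-constraints on $B$ together with the summed column-constraints on $A$ (rather than the opposite pairing) collapse precisely to the $2\times 2$ alternating combination of $C$-entries in $D_{ij}$, and that the strict inequalities are preserved under the summation. Once this is verified, the rest of the argument is a one-line application of Proposition~\ref{prop:hminrank} to the rank-reducing factorization $D=\bfP C\bfQ^{T}$.
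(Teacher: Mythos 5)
Your proposal is correct and follows essentially the same route as the paper: the paper likewise sums the four strict-Nash inequalities in each $2\times 2$ subgame to get the alternating $C$-combination with the right sign, packages this as $\bfP_n \bfC \bfP_n^T$ (your $D=\bfP C\bfQ^T$ with $\bfP=\bfQ=\bfP_n$), and invokes the min-rank bound of Proposition~\ref{prop:hminrank} together with $\rank(\bfX\bfY)\le\min\{\rank(\bfX),\rank(\bfY)\}$. No substantive differences.
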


From Lemma~\ref{lemma:highrank} it is straightforward to complete the proof of Theorem \ref{thm:highrank}. In particular, we can select observations as diagonal or off-diagonal entries in all the subgames of a $2$-regular data set $T$ such that the sign pattern matches $\bfH_n$.  Then the rank of any game that rationalizes $T$ is $\Omega(\sqrt{n})$.

\begin{proof}[Proof of Lemma \ref{lemma:highrank}]
Define matrix $\bfP_n$ of size $n/2 \times n$ as follows: the $i$th row consists of all zeroes, except for +1 in position $2i-1$ and -1 in position $2i$. Thus,

\[
\bfP_4   = \left[ \begin{array}{rrrr}
				+1 & -1 & 0 & 0 \\
				0 & 0 & +1 & -1
				\end{array} \right] \, .
\]

\noindent It is known that for any matrices ${\bfX}$, ${\bfY}$, $\rank({\bfX}{\bfY}) \le \min \{\rank({\bfX}), \rank({\bfY})\}$. We will show that $\rank(\bfP_n \bfC \bfP^T_n) \ge \sqrt{n}$. The lemma then follows immediately.

The key step in the argument is the following claim.

\begin{claim}
For a $2 \times 2$ bimatrix game $(\bfA', \bfB')$, if the equilibria are exactly the diagonal elements, then $\bfP_2 \bfC' \bfP^T_2 > 0$. Conversely, if the equilibria are exactly the off-diagonal elements, then $\bfP_2 \bfC' \bfP^T_2 < 0$.
\label{clm:p2}
\end{claim}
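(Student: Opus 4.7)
The plan is to prove Claim~\ref{clm:p2} by direct expansion: when $n=2$, the matrix $\bfP_2$ is simply the row vector $\begin{pmatrix} +1 & -1 \end{pmatrix}$, so $\bfP_2 \bfC' \bfP_2^T$ is a scalar. A quick computation gives
\[
\bfP_2 \bfC' \bfP_2^T \;=\; \bfC'_{11} - \bfC'_{12} - \bfC'_{21} + \bfC'_{22}.
\]
So the task reduces to showing that this alternating sum of entries of $\bfC'=\bfA'+\bfB'$ is positive in the ``diagonal equilibria'' case and negative in the ``off-diagonal equilibria'' case.

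For the first direction, I would use the strict Nash conditions at the two diagonal cells. If $(1,1)$ is a strict pure Nash equilibrium of $(\bfA',\bfB')$, then $\bfA'_{11} > \bfA'_{21}$ and $\bfB'_{11} > \bfB'_{12}$; and if $(2,2)$ is a strict pure Nash equilibrium, then $\bfA'_{22} > \bfA'_{12}$ and $\bfB'_{22} > \bfB'_{21}$. Adding the two inequalities for the row player gives $\bfA'_{11}+\bfA'_{22} > \bfA'_{12}+\bfA'_{21}$, and adding the two for the column player gives $\bfB'_{11}+\bfB'_{22} > \bfB'_{12}+\bfB'_{21}$. Summing these two inequalities yields $\bfC'_{11}+\bfC'_{22} > \bfC'_{12}+\bfC'_{21}$, which is precisely $\bfP_2 \bfC' \bfP_2^T > 0$.

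The converse direction is a symmetric calculation. If $(1,2)$ is a strict pure Nash equilibrium, then $\bfA'_{12} > \bfA'_{22}$ and $\bfB'_{12} > \bfB'_{11}$; and if $(2,1)$ is a strict pure Nash equilibrium, then $\bfA'_{21} > \bfA'_{11}$ and $\bfB'_{21} > \bfB'_{22}$. Adding these four inequalities gives $\bfC'_{12}+\bfC'_{21} > \bfC'_{11}+\bfC'_{22}$, i.e., $\bfP_2 \bfC' \bfP_2^T < 0$.

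There is no real obstacle here: the claim is a direct consequence of the definition of strict pure Nash equilibrium in a $2\times 2$ game combined with a one-line algebraic identity for $\bfP_2 \bfC' \bfP_2^T$. The only mild subtlety is that the hypothesis of the claim says the diagonal (respectively off-diagonal) entries are \emph{exactly} the equilibria, but we only need the forward implication that the asserted cells \emph{are} strict equilibria; the exclusion of the other cells from the equilibrium set is not required for the sign conclusion and can simply be ignored in the argument.
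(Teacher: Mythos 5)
Your proposal is correct and is essentially identical to the paper's own proof: both expand $\bfP_2 \bfC' \bfP_2^T$ as the alternating sum $c'_{11}+c'_{22}-c'_{12}-c'_{21}$ and derive its sign by summing the four strict best-response inequalities at the two asserted equilibrium cells. Your closing remark that only the ``are equilibria'' direction of the hypothesis is needed (not the exclusion of the other cells) is accurate and matches how the paper implicitly uses the claim.
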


To prove the claim, let us consider the first part first. Since the diagonal elements are equilibria, $a'_{11} > a'_{21}$ and $b'_{11}  > b'_{12}$, along with, $a'_{22} > a'_{12}$ and $b'_{22} > b'_{21}$.

Adding up these inequalities yields $a'_{11} + b'_{11} + a'_{22} + b'_{22} > a'_{21} + b'_{12} + a'_{12} + b'_{21}$, or $c'_{11} + c'_{22} > c'_{12} + c'_{21}$. Since $\bfP_2 \bfC' \bfP^T_2 = c'_{11} + c'_{22} - c'_{12} - c'_{21}$, the first part of the claim follows. The second part of the claim is easily seen by reversing each of the previous inequalities.

Proceeding with the proof of Lemma \ref{lemma:highrank}, let $\bfL := \bfP_n \bfC \bfP^T_n$. Then $\bfL$ has size $n/2 \times n/2$, and $\bfL_{ij} = C_{2i-1,2i-1} + C_{2j,2j} - C_{2i-1,2j} - C_{2i,2j-1}$. Since $(A,B)$ rationalizes the data set, for each $2 \times 2$ subgame that corresponds to subgames in the data set, the equilibria is either on the diagonal elements, or on the off-diagonal elements. By Claim~\ref{clm:p2}, in the first case, $\bfL_{i,j} > 0$, and in the second case, $\bfL_{i,j} < 0$.

Let ${\bfS}$ be the sign-pattern matrix obtained for the given data set; by the statement of the theorem, ${\bfS} = \bfH_n$. Then it follows by the construction of $\bfS$ that ${\bfS}_{ij} > 0 \Rightarrow \bfL_{ij} > 0$, and ${\bfS}_{ij} < 0 \Rightarrow \bfL_{ij} < 0$. Thus, ${\bfS} = \sign(\bfL)$, and hence $\sign(\bfL) = \bfH_n$. Then by Proposition~\ref{prop:hminrank}, $\rank(\bfP_n \bfC \bfP^T_n)$ $= \rank(\bfL)$ $= \rank(\bfH_n)$ $\ge \sqrt{n}$. 
\end{proof}

\subsection{High Rank in Data Sets with Uniqueness}
\label{sec:highrank2}

The previous construction gives a data set requiring high-rank rationalization that is laminar, but does not satisfy uniqueness. We now show how to modify the laminar data set $T$ to obtain a data set $T'$ that satisfies uniqueness, but has many intersecting subgames.

Let $T$ be a 2-regular data set with sign-pattern matrix $H_n$. The modification to obtain $T'$ is based on the following observation. Each pair of observations corresponding to a subgame in the original data set can be replaced by three observations, as shown in Figure~\ref{fig:highreplace}.

\begin{figure}[t]
\centering
\subfloat[The diagonal case.]{\centering \includegraphics[scale=0.3]{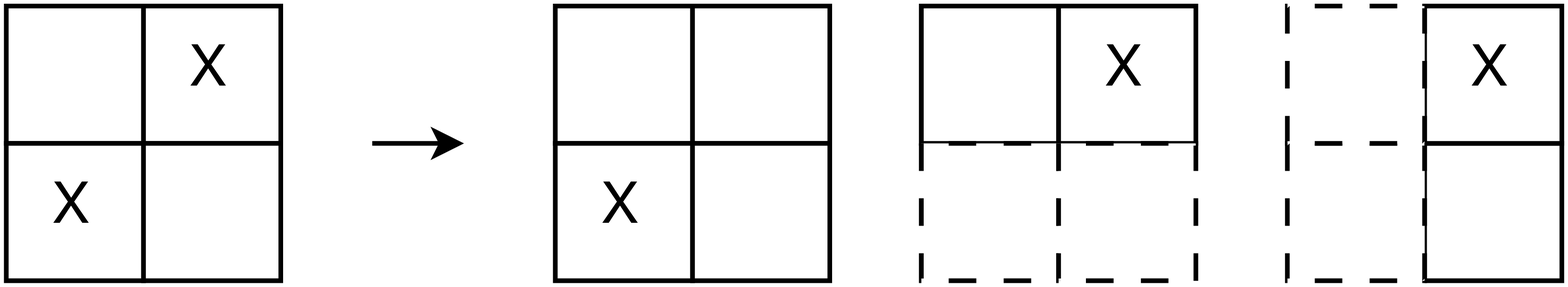}
\label{fig:highrankrep1}}
 \qquad
\subfloat[The off-diagonal case.]{\centering \includegraphics[scale=0.3]{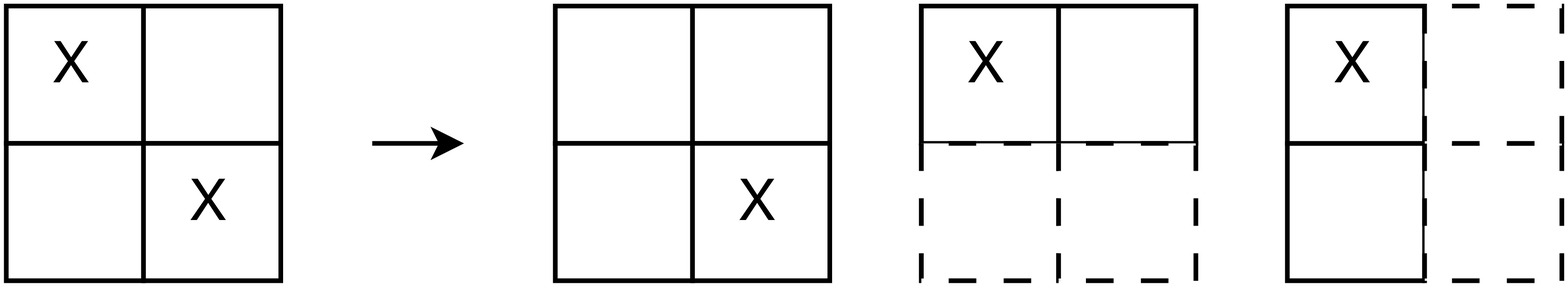}
\label{fig:highrankrep2}}
\caption{\emph{Constructing a data set that satisfies uniqueness. In each of the two cases, the two observations on the left are replaced by the three observations on the right.}}
\label{fig:highreplace}
\end{figure}

This gives us our data set $T'$. Note that we replace two observations
that do not satisfy uniqueness with intersecting subgames. However,
the subgames introduced satisfy the uniqueness property. Thus, the
data set $T'$ is no longer laminar, but does satisfy uniqueness. The
subgames are constructed to capture the same requirements on
payoffs as were induced by the observations with multiplicity of equilibria.

The proof that every rationalization of $T'$ has rank $\Omega(\sqrt{n})$ is very similar to the proof for data set $T$. In particular, we can establish the following claim, corresponding to Claim~\ref{clm:p2}.

\begin{claim}
If the $2 \times 2$ bimatrix game $(\bfA', \bfB')$ rationalizes the observations on the right in Figure~\ref{fig:highrankrep1}, then $\bfP_2 \bfC' \bfP^T_2 > 0$. Conversely, if $(\bfA', \bfB')$ rationalizes the observations on the right in Figure~\ref{fig:highrankrep2}, then $\bfP_2 \bfC' \bfP^T_2 < 0$.
\label{clm:p3}
\end{claim}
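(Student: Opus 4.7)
The plan is to reduce Claim~\ref{clm:p3} to the same one-line calculation that proved Claim~\ref{clm:p2}. The key point, asserted in the text immediately preceding the claim, is that the three replacement subgames on the right in each panel of Figure~\ref{fig:highreplace} are constructed to impose exactly the same four payoff-dominance inequalities on $(\bfA',\bfB')$ as did the original pair of multi-equilibrium observations on the left. Once that equivalence is read off the figure, the rest of the argument is just summation.

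First I would inspect the right side of Figure~\ref{fig:highrankrep1} and list each replacement subgame together with its unique observed choice. Since each such subgame has a unique strict Nash equilibrium, each observation yields one or two strict dominance inequalities for $\bfA'$ or $\bfB'$. The construction is set up so that the totality of these inequalities is exactly
\[
a'_{11} > a'_{21},\qquad b'_{11} > b'_{12}, \qquad a'_{22} > a'_{12}, \qquad b'_{22} > b'_{21},
\]
which are precisely the conditions that $(1,1)$ and $(2,2)$ both be strict Nash equilibria of the unrestricted $2\times 2$ game. Summing the four inequalities gives
\[
a'_{11} + b'_{11} + a'_{22} + b'_{22} \;>\; a'_{21} + b'_{12} + a'_{12} + b'_{21},
\]
i.e.\ $c'_{11} + c'_{22} > c'_{12} + c'_{21}$. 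By the definition of $\bfP_2$, the left-hand side minus the right-hand side equals $\bfP_2 \bfC' \bfP_2^T$, so $\bfP_2 \bfC' \bfP_2^T > 0$, as required.

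The off-diagonal case is symmetric: the three observations on the right of Figure~\ref{fig:highrankrep2} are designed to force the reversed inequalities $a'_{21} > a'_{11}$, $b'_{12} > b'_{11}$, $a'_{12} > a'_{22}$, and $b'_{21} > b'_{22}$, which are exactly what would be obtained if $(1,2)$ and $(2,1)$ were both strict equilibria in the unrestricted $2\times 2$ game. Adding them yields $c'_{12} + c'_{21} > c'_{11} + c'_{22}$, so $\bfP_2\bfC'\bfP_2^T < 0$.

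The only nontrivial part of the argument is the bookkeeping in the first step: verifying from the figure that the three replacement subgames and their equilibria cover all four targeted inequalities and impose no additional constraints that would require a separate rank argument. After that, the proof is a verbatim rerun of the computation that established Claim~\ref{clm:p2}, and Lemma~\ref{lemma:highrank} then applies to $T'$ with the sign-pattern matrix $\bfH_n$ in exactly the same way, showing that every rationalization of $T'$ has rank $\Omega(\sqrt{n})$ even though $T'$ satisfies uniqueness.
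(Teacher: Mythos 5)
Your proposal is correct and takes essentially the same route as the paper, whose entire proof of Claim~\ref{clm:p3} is the observation that the inequalities $a'_{11}>a'_{21}$, $b'_{11}>b'_{12}$, $a'_{22}>a'_{12}$, $b'_{22}>b'_{21}$ (or their reversals) from Claim~\ref{clm:p2} continue to hold under the three replacement observations, so that the same summation $\bfP_2\bfC'\bfP_2^T = c'_{11}+c'_{22}-c'_{12}-c'_{21}$ settles the sign. The one step you flag as bookkeeping---reading off from Figure~\ref{fig:highreplace} that the replacement subgames' unique strict equilibria yield exactly those dominance relations---is likewise left to inspection of the figure in the paper, so there is no substantive difference between the two arguments.
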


The proof of the claim is obtained by observing that the inequalities in Claim~\ref{clm:p2} continue to hold under the modified conditions in Claim~\ref{clm:p3}. The rest of the proof that any rationalizing game for $T'$ has rank $\Omega(\sqrt{n})$, is exactly the same as in the case of $T$.

\section{Proof of Theorem~\ref{thm:rankone}}

\label{sec:entire-game}

Theorem \ref{thm:rankone} considers data of the form $T=\{ ((i,j), [n], [n] ) \}_{i,j}$, which implies that in all the observations the players were free to choose any strategy.  In this section, we prove that whenever data of this form is rationalizable, it is rationalizable with a rank one game. For ease of notation we simply write the observed set $T$ as $\{(i,j)\}_{i,j}$.

To begin the proof, note that if $T$ is rationalizable then, by definition, there exists a bimatrix game, say $(A', B')$, such that every $(i,j) \in T$ is a \emph{strict} Nash equilibrium in $(A',B')$. In other words, for any $(i,j) \in T$, we have $A_{ij} > A_{k j}$ for all $k \neq i$ and $B_{ij} > B_{i k}$ for all $k \neq j$. Therefore all distinct tuples $(i,j)$ and $(i', j')$ in $T$ are in fact component-wise distinct: $i  \neq i'$ \emph{and} $j \neq j'$.

Say we have $\ell$ observations over an $n \times n$ strategy space, $|T| = \ell$; below we construct a $n \times n$ rank-1 bimatrix game $(A,B)$ in which $(i,i)$ is a strict Nash equilibrium for all $i \in [\ell]$ and no other strategy profile is an equilibrium.  Then we can permute the rows of $A$ and $B$ (i.e., relabel strategies) and their columns to get a rank-$1$ game that rationalizes $T$.

The construction is as follows.  Set $A_{ij} = 2ij - i^2 + j^2 $ and $B_{ij} = 2ij + i^2 - j^2$ if $i \in [\ell]$ or $j \in [\ell]$. In addition, set $A_{ij} = 0$ and $B_{ij} = 4ij $ for all $i,j \in \{ \ell+1, \ell + 2, \ldots, n \}$. Note that the $(i, j)$th entry of matrix $A+B$ is $4ij$, hence the game is of rank $1$. 

For all $i \leq \ell$, the largest term in the $i$th column (row) of matrix $A$ ($B$) is on the diagonal. Hence, all the strategy profiles in $\{ (i,i) \mid i \in [\ell] \}$ are strict Nash equilibrium. Note that for all strategy profiles $(i,j) \in \{ \ell+1, \ldots, n \} \times \{ \ell+1, \ldots, n \}$  the row player has a benefiting deviation, implying that none of them can be an equilibrium. In particular, for any such strategy profile we have $A_{1j} > A_{ij}$ (since, $A_{1j} = 2j -1 + j^2$  and $A_{ij}=0$). Thus the set of Nash equilibrium for the game is exactly $\{ (i,i) \mid i \in [\ell] \}$.
\section{Proof of Theorem~\ref{thm:laminar}}
\label{sec:laminar}

In this section we prove that if a data set satisfies laminarity and uniqueness, then it can be rationalized by a zero-sum game. Additionally, every observed subgame in the constructed zero-sum game has a unique equilibrium, corresponding to the observed choices.

Our proof depends on graphs constructed on the payoffs of the row player, called revealed-preference graphs. In a zero-sum game, the payoffs of the row player determine the payoffs of the column player, hence it is sufficient to focus on the payoffs of the row player. Revealed-preference graphs are graphical depictions of the relations that must exist between the payoffs in order for the data set to be rationalizable.

\begin{definition}
\emph{A directed graph $G=(V,E)$ is a \textbf{revealed-preference graph} if there is a bijection $\sigma:V  \rightarrow [n] \times [n]$ so that for every $e=(v,w) \in E$, $\sigma(v)$ and $\sigma(w)$ have exactly one identical coordinate.}
\end{definition}

$G$ is a directed graph, hence edge $e=(v,w)$ is directed from vertex
$v$ to vertex $w$. Correspondence $\sigma$ identifies the vertices of
the revealed-preference graph, hence, with a slight abuse of notation,
we will use $(i,j)$ to denote both strategy profiles and vertices. If
an edge goes between row entries: $e = ((i,j),(i',j))$,  then it is
called a \emph{row edge}. If $e = ((i,j),(i,j'))$, edge $e$ is a
\emph{column} edge. By definition, every edge in a revealed-preference
graph must be either a row edge or a column edge.

\begin{definition}
\emph{A revealed-preference graph $G$ \textbf{implements} an observation $((i,j),X,Y)$ in a data set $T$ if $E$ contains the following edges:
\begin{enumerate}
\item Edges $((i,j),(i',j))$ for each $i' \in X \setminus \{i\}$.
\item Edges $((i,j'),(i,j))$ for each $j' \in Y \setminus \{j\}$.
\end{enumerate}
Further, $G$ \textbf{strongly implements} observation $((i,j),X,Y) \in T$ if $G$ implements the observation and every vertex $(i',j') \neq (i,j) \in X \times Y$ either has a row edge from a vertex in $X \times Y$, or has a column edge to a vertex in $X \times Y$.
}
\label{def:implement}
\end{definition}

We say $G$ (strongly) implements a data set $T$ if it (strongly) implements every observation in $T$. Revealed-preference graphs are useful because of the following implications on the data set they implement.

\begin{lemma}
If revealed-preference graph $G$ implements data set $T$ and is acyclic, then the data set can be rationalized by a zero-sum game.
\label{lem:rpzerosum}
\end{lemma}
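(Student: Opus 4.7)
The plan is to read each directed edge of $G$ as a strict inequality on the row player's payoffs and use acyclicity to realize all such inequalities simultaneously; then take $B := -A$ so that the game is automatically zero-sum and the observed choices are strict Nash equilibria in their subgames.

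Concretely, I would adopt the convention that an edge $(v,w) \in E$ encodes the constraint $A_{\sigma(v)} > A_{\sigma(w)}$. Since $G$ is a finite DAG on $n^2$ vertices, it admits a topological ordering; assigning distinct real numbers $a(v)$ in strictly decreasing fashion along any such ordering produces values with $a(v) > a(w)$ for every edge $(v,w) \in E$. Defining $A_{ij} := a(\sigma^{-1}(i,j))$ and setting $B := -A$ then gives a zero-sum game, since $A + B = 0$ has rank $0$.

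Next I would verify rationalization. Fix any observation $((i,j),X,Y) \in T$. Because $G$ implements $T$, there are row edges $((i,j),(i',j)) \in E$ for every $i' \in X \setminus \{i\}$ and column edges $((i,j'),(i,j)) \in E$ for every $j' \in Y \setminus \{j\}$. By the convention above, these edges translate to $A_{ij} > A_{i'j}$ for all such $i'$ (so the row player strictly prefers $i$ against $j$ in the subgame $(X,Y)$) and $A_{ij'} > A_{ij}$ for all such $j'$, which, after applying $B = -A$, becomes $B_{ij} > B_{ij'}$ (so the column player strictly prefers $j$ against $i$). Hence $(i,j)$ is a strict pure Nash equilibrium of the subgame $(X,Y)$, and $(A,B)$ rationalizes $T$ as a zero-sum game.

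The real content of the lemma is that acyclicity of a finite directed graph is exactly what is needed for its edges to be simultaneously realizable as strict inequalities on real-valued labels of its vertices, i.e., for the directed graph to extend to a strict partial (and thence linear) order via topological sort. There is no serious obstacle beyond aligning the orientation of edges in $G$ with the direction of strict preference required by the two players; the definition of ``implements'' is precisely engineered so that the single choice $B = -A$ handles both directions at once.
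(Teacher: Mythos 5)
Your proposal is correct and follows essentially the same route as the paper's proof: assign row-player payoffs by a topological sort of the acyclic graph so that every edge $(v,w)$ yields $A_{\sigma(v)} > A_{\sigma(w)}$, set $B = -A$, and check that the row edges and column edges of an implemented observation translate into exactly the strict best-response inequalities for the two players. The only (immaterial) difference is that the paper assigns levels by iteratively peeling off sinks rather than using strictly decreasing distinct values along the order.
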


\begin{proof}
We explicitly construct the payoff matrices $A,
-A$ for the row and column players and show that these rationalize
the data set. Since $G$ is acyclic, it has a topological ordering. A
topological ordering naturally corresponds to a partial ordering, and
we use this correspondence to choose the entries in $A$. Start with $l = 1$, where $l$ is the length traversed so far in the ordering. For every vertex
$(i,j)$ with no outgoing edge, set $A_{ij} = l$. Remove these vertices
and the incoming edges for these vertices, increment $l$ by 1, and
recurse. It is easy to see that by this construction, if there is a
path from $(i,j)$ to $(i',j')$, then $A_{ij} > A_{i'j'}$. Fill the
remaining entries in $A$ arbitrarily.

We now show that the game $(A,-A)$ rationalizes $T$. Since $G$
implements $T$, for every observation $((i,j),X,Y) \in T$, $G$
contains an edge from $(i,j)$ to each vertex $(i',j)$ for $i' \in X
\setminus \{i\}$. Then $A_{ij} > A_{ij'}$ by construction, and hence
if the column player plays $j$, the best response available to the row
player in the subgame $(X,Y)$ is $i$. Similarly, $G$ contains an edge
from each vertex $(i,j')$ for $j' \in Y \setminus \{j\}$ to
$(i,j)$. Then $A_{ij'} > A_{ij}$, or $-A_{ij} > -A_{ij'}$. Thus if the
row player plays $i$, the best response available to the column player
in the subgame $(X,Y)$ is $j$. Thus $i$ and $j$ are best responses,
and $(i,j)$ must be an equilibrium in the subgame $(X,Y)$. This is
true for every observation, and hence $(A,-A)$ is zero sum and
rationalizes data set $T$.
\end{proof}

\begin{corollary}
If $G$ strongly implements $T$ and is acyclic, then $T$ can be rationalized by a zero-sum game with the additional property that every subgame in $T$ has a unique equilibrium.
\label{cor:rpzerosumunique}
\end{corollary}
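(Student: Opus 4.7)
The plan is to reuse the zero-sum payoff construction from Lemma~\ref{lem:rpzerosum} essentially verbatim, and then check that the strengthening ``strongly implements'' buys us uniqueness of equilibria in every observed subgame. Specifically, since $G$ strongly implements $T$, it also implements $T$, so from a topological order of $G$ I would build $A$ satisfying $A_{ij} > A_{i'j'}$ whenever there is a directed path from $(i,j)$ to $(i',j')$, and set the column player's matrix to $-A$. Lemma~\ref{lem:rpzerosum} then handles the bulk of the work: every observed choice $(i,j)$ is already a strict Nash equilibrium in the subgame $(X,Y)$ of its observation. All that remains is to rule out additional equilibria.

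For uniqueness, I would fix an observation $((i,j),X,Y) \in T$ and an arbitrary profile $(i',j') \in X \times Y$ with $(i',j') \neq (i,j)$, and show that $(i',j')$ cannot be a Nash equilibrium of the subgame $(X,Y)$. By the strongly-implements condition, one of two things must hold: either $(i',j')$ has an incoming row edge from some vertex $(i'',j') \in X \times Y$, or it has an outgoing column edge to some vertex $(i',j'') \in X \times Y$. In the first case the topological construction yields $A_{i''j'} > A_{i'j'}$, so against column choice $j'$ the row player strictly prefers the available deviation $i'' \in X$; in the second case it yields $A_{i'j'} > A_{i'j''}$, hence $-A_{i'j''} > -A_{i'j'}$, so against row choice $i'$ the column player strictly prefers the available deviation $j'' \in Y$. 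Either way $(i',j')$ admits a profitable deviation within the subgame $(X,Y)$, contradicting equilibrium.

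I do not expect a genuine obstacle: the entire role of ``strongly implements'' over ``implements'' is precisely to ensure that the deviation witnessing non-equilibrium of $(i',j')$ lies \emph{inside} the subgame (i.e., $i'' \in X$ or $j'' \in Y$ respectively). Without this qualifier one might produce a preferred strategy outside $(X,Y)$, which would not break the equilibrium in the restricted game. Once this observation is made explicit, the corollary is an immediate strengthening of Lemma~\ref{lem:rpzerosum} with no modification of the payoff construction.
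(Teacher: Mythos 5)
Your proposal is correct and follows essentially the same route as the paper: reuse the payoff construction of Lemma~\ref{lem:rpzerosum} unchanged, then use the strong-implementation condition to produce, for every non-observed profile in $X \times Y$, a profitable deviation that lies inside the subgame. Your closing remark about why the ``strongly'' qualifier is needed (to keep the deviating strategy within $X$ or $Y$) is exactly the point of the paper's argument.
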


\begin{proof}
Since $G$ is acyclic and implements $T$, we can construct $(A,-A)$ that rationalizes $T$. Since $T$ satisfies uniqueness, every subgame $(X,Y)$ in $T$ has a unique observed choice. We will additionally show that for every subgame $(X,Y)$ in $T$ if $(i',j')$ is not this observed choice, then $(i',j')$ is not an equilibrium in the subgame $(X,Y)$. This proves uniqueness.

If $(i',j')$ is not the unique observed choice for subgame $(X,Y)$ that appears in $T$, by definition of strong implementation, the vertex $(i',j')$ either has a row edge from a vertex $(v,j')$ in $X \times Y$, or has column edge to a vertex $(i',w)$ in $X \times Y$. In the first case, by the construction of $A$, $A_{i'j'} < A_{vj'}$ and the row player has an improving deviation to strategy $v$. In the second case, $-A_{i'j'} < -A_{i'w}$ and the column player has an improving deviation to strategy $w$. In either case, $(i',j')$ is not an equilibrium.
\end{proof}

If data set $T$ includes an observation for the entire game $([n],[n])$, Corollary~\ref{cor:rpzerosumunique} holds for this observation as well, and there is a unique equilibrium in the game $([n],[n])$.

If $T$ is laminar and satisfies uniqueness, we will construct a revealed-preference graph that strongly implements $T$ and is acyclic. By Corollary~\ref{cor:rpzerosumunique}, $T$ must then be rationalizable by a zero-sum game, and every subgame observed in $T$ has a unique equilibrium in this game. We assume that no two observations in $T$ have exactly the same observed choice. This is without loss of generality, since if $O_1 = ((i,j),X,Y) \in T$ and $O_2 = ((i,j),X',Y') \in T$ have the same observed choice $(i,j)$, then (by laminarity) one of $X \times Y$, $X' \times Y'$ must contain the other. We remove the observation corresponding to the smaller subgame to obtain data set $T'$. Then any game that rationalizes $T'$ must also rationalize $T$, and if $(i,j)$ is the unique equilibrium in the larger subgame, it is also the unique equilibrium in the smaller subgame.

Any laminar family of sets has a natural representation as a tree\footnote{Technically we obtain a forest. However the distinction is unimportant in this case, and for simplicity we assume we obtain a tree}, where every set corresponds to a vertex. Vertex $v$ corresponding to set $R$ is a child of $w$ corresponding to set $S$ if $R \subset S$, and no set $Q$ in the family satisfies $R \subset Q \subset S$. If two vertices in this tree are not on the same path to the root, the corresponding sets are disjoint. Thus, sets corresponding to leaves in the tree are disjoint, and do not contain any other sets.

Let $\tree$ be the tree obtained as described above for the subgames in $T$. For a subgame $(X,Y)$ corresponding to vertex $v$ in $\tree$, we use $\child(X,Y)$ to denote the subgames corresponding to children of $v$. Our construction of $G=(V,E)$ is inductive on the height of $\tree$. Let $V = [n] \times [n]$, where $n$ is the size of the game for the data set $T$.

In the base case, the height of $\tree$ is 1, and hence the data set
consists of disjoint subgames and a single observation for each
subgame. We describe the construction for a single observation
$((i,j),X,Y)$. Assume without loss of generality that $(i,j)=
(1,1)$. Add edges in $G$ as described in Definition~\ref{def:implement} so that $G$ implements $T$. For strong implementation, we add row edges $((1,j'),(i',j'))$ to $E$ for each $i' \in X \setminus \{ 1 \}$ and $j' \in Y \setminus \{ 1 \}$. Then every vertex in $X \times Y$ that differs from $(1,1)$ in both coordinates has a row edge from a vertex in $X \times Y$. Since $G$ implements $T$, every vertex in $X \times Y$ that differs from $(1,1)$ in a single coordinate already has either a row edge from $(1,1)$ or a column edge to $(1,1)$. The construction is shown for a single observation in a $3 \times 3$ subgame in Figure~\ref{fig:constraint-graph}.

\begin{figure}[t]
\centering
\subfloat[\emph{A revealed-preference graph that strongly implements the observation ((1,1),[3],[3]).}]{ \hspace{.4in}\includegraphics[scale=0.9]{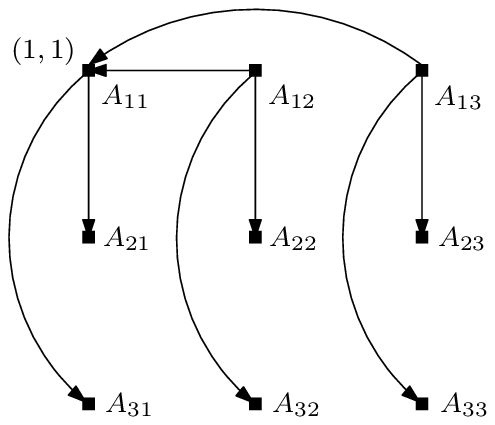}
\label{fig:constraint-graph} \hspace{.3in}}
\qquad
\subfloat[\emph{Revealed-preference graph for subgame $S$.}]{ \hspace{.1in} \includegraphics[scale=.5]{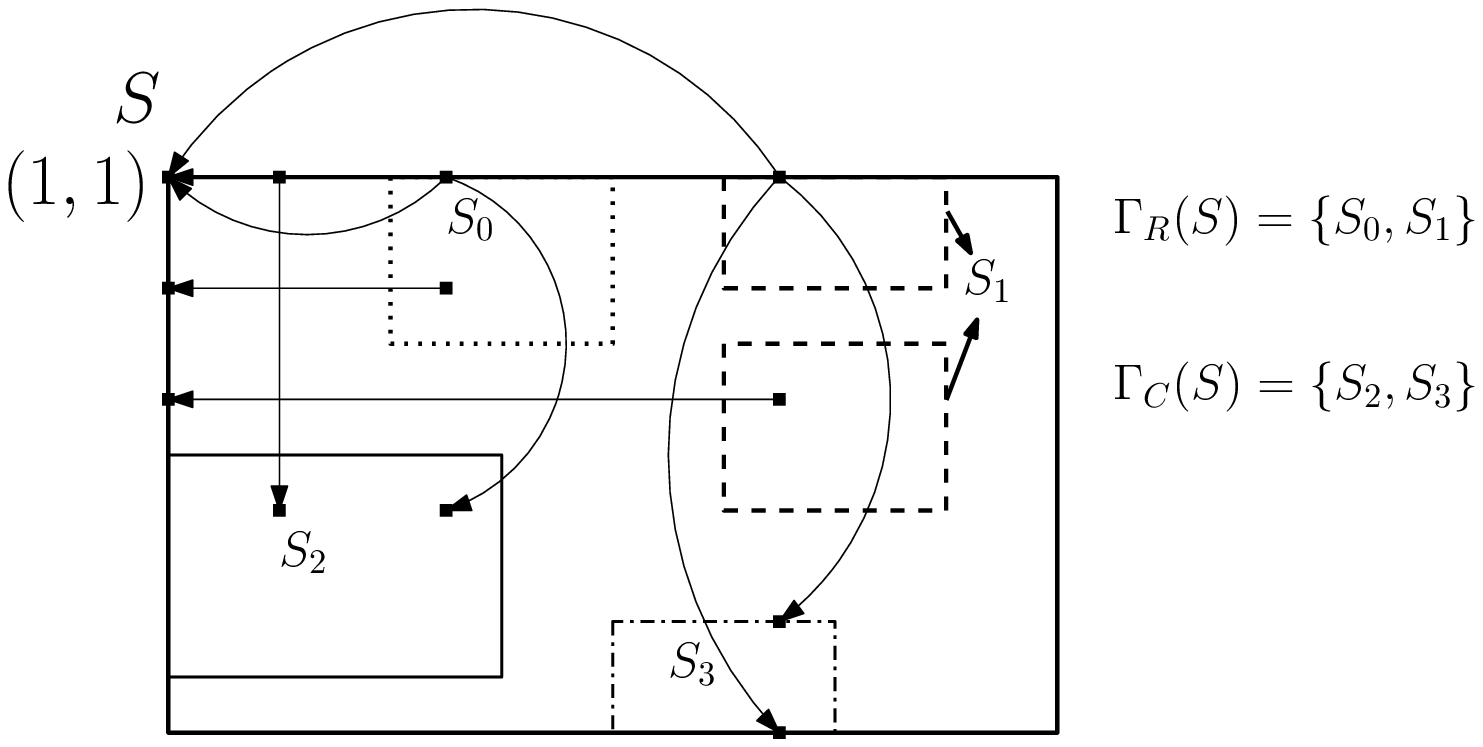} \label{fig:induction} \hspace{.1in}}
\caption{Revealed-preference graphs}
\end{figure}

\begin{claim}
The revealed-preference graph constructed in the base case is acyclic.
\label{clm:baseacyclic}
\end{claim}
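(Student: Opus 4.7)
The plan is to exploit the disjointness of subgames at the leaves of $\tree$ to reduce the question to a single observation, and then to show that the subgraph contributed by any single observation has every directed walk of length at most~$2$. Since the height of $\tree$ equals $1$, the subgames in the base-case data set are pairwise disjoint as subsets of $[n] \times [n]$; in particular, every edge added for the observation $((i,j), X, Y)$ has both endpoints in $X \times Y$. Hence any directed cycle in $G$ must lie entirely inside the edges contributed by a single observation, and it suffices to verify acyclicity of the subgraph built for one fixed observation.

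Fix such an observation and, after relabelling, assume its observed choice is $(1,1)$ on the subgame $(X,Y)$. The construction adds exactly three kinds of edges, and the crucial feature I would exploit is that the \emph{tail} of each of them lies in row~$1$ of $X \times Y$: the implementation row edges $(1,1) \to (i',1)$ for $i' \in X \setminus \{1\}$, the implementation column edges $(1,j') \to (1,1)$ for $j' \in Y \setminus \{1\}$, and the strong-implementation row edges $(1,j') \to (i',j')$ for $i' \in X \setminus \{1\}$ and $j' \in Y \setminus \{1\}$. I would then enumerate the out-neighbourhoods explicitly: vertex $(1,1)$ points only into $\{(i',1) : i' \in X \setminus \{1\}\}$; each $(1,j')$ with $j' \neq 1$ points only into $\{(1,1)\} \cup \{(i',j') : i' \in X \setminus \{1\}\}$; and every vertex with first coordinate different from $1$ has no outgoing edges at all.

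Consequently, every vertex outside row~$1$ is a sink in this subgraph, and from any row-$1$ vertex a directed walk can pass through at most one further row-$1$ vertex, namely $(1,1)$, before landing in a sink. Therefore every directed walk has length at most $2$, no walk can revisit its starting vertex, and the subgraph is acyclic. Combined with the first paragraph, this establishes the claim. The only subtlety to verify is that the strong-implementation edges do not produce a cycle within row~$1$ itself, but this is immediate from the fact that their heads $(i',j')$ satisfy $i' \neq 1$; I do not foresee any further obstacle.
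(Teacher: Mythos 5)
Your proposal is correct and follows essentially the same route as the paper: first use disjointness of the base-case subgames to confine any potential cycle to the edges of a single observation, then inspect the three edge types added for that observation. The paper phrases the final step as ``only $(i,j)$ has both incoming and outgoing edges,'' while you phrase it via sink vertices and walks of length at most $2$; these are two readings of the same structural observation.
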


\begin{proof}
In the base case, the sets of vertices in V corresponding to each subgame are disjoint. Further, by the construction of the revealed-preference graph, every edge has
both its vertices in the same subgame. Thus all the edges in any cycle must correspond to those added for a single observation. However, for the observation $((i,j),X,Y)$ the only vertex that has both incoming and outgoing edges is $(i,j)$. Hence there cannot be a cycle. 
\end{proof}

For the inductive step, the height of the tree $\tree$ is $k > 1$. We construct $G$ so that it strongly implements all subgames corresponding to vertices in $\tree$ at height less than $k$. By the induction hypothesis, $G$ is also acyclic. For each observation $O=((i,j),X,Y)$ with subgame $(X,Y)$ at height $k$, we add edges to $E$ as follows. As before, we assume without loss of generality that $(i,j) = (1,1)$.

Add edges in $E$ as described in Definition~\ref{def:implement} so that $G$ implements $O$. Let $E1$ and $E2$ denote the set of edges added to satisfy Properties (1) and (2) in Definition~\ref{def:implement} respectively. For strong implementation, we add the edge sets $E3$ and $E4$, described below. In the following discussion, we use $\child_R(X,Y)$ to denote the subgames in $\child(X,Y)$ that contain row one: $\child_R(X,Y) :=\{ (I', J')  \in \child(X,Y) \mid 1 \in I' \}$. We denote the remaining subgames by $\child_C(X,Y) := \child(X,Y) \setminus \child_R(X,Y)$. For any subgame $(X',Y')$, define $V(X',Y') := X' \times Y'$. Define $V_R(X,Y):= \bigcup_{(X',Y') \in \child_R(X,Y)} V(X',Y')$ and $V_C(X,Y) := \bigcup_{(X',Y') \in \child_C(X,Y) } V(X',Y')$. We use $V_L(X,Y)$ to denote the remaining vertices not in the first row or column: $V_L(X,Y) := \{(i,j) \mid i \neq 1, j \neq 1 \mbox{ and } (i,j) \in V(X,Y) \setminus V_R(X,Y) \bigcup V_C(X,Y)\}$.

\begin{itemize}
\item[\emph{E3}:] For every vertex $(i,j) \in V_R(X,Y)$, add edge $((i,j),(i,1))$.
\item[\emph{E4}:] For every vertex $(i,j) \in V_C(X,Y) \bigcup V_L(X,Y)$, add edge $((1,j),(i,j))$.
\end{itemize}

Let $E' = E1 \bigcup E2 \bigcup E3 \bigcup E4$. It is easy to see that the resulting graph strongly implements observation $O=((i,j),X,Y)$. We show now that the resulting graph is also acyclic. The construction and the proof of Claim~\ref{clm:inductacyclic} is also depicted in Figure~\ref{fig:induction}.

\begin{claim}
The graph obtained by the induction step is acyclic.
\label{clm:inductacyclic}
\end{claim}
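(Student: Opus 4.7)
The plan is to prove this by contradiction. Suppose the graph obtained at the end of this induction step contains a cycle $C$. By the induction hypothesis, the graph prior to adding the root-observation edges $E1 \cup E2 \cup E3 \cup E4$ is acyclic, so $C$ must contain at least one new edge.

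I would first show that the new edges alone cannot form a cycle. Inspecting the four edge sets yields two structural facts: no non-self-loop new edge starts at a column-1 vertex $(i, 1)$ with $i \neq 1$, and no new edge ends at a row-1 vertex $(1, j)$ with $j \neq 1$. Define a potential $\phi$ by $\phi((1,1)) = 0$, $\phi((1, j)) = 2$ for $j \neq 1$, $\phi((i, 1)) = -2$ for $i \neq 1$, $\phi(v) = 1$ for interior $v \in V_R(X, Y)$, and $\phi(v) = -1$ for interior $v \in V_C(X, Y) \cup V_L(X, Y)$. A direct check confirms every new edge strictly decreases $\phi$, so the subgraph of new edges alone is a DAG and $C$ must also use old edges.

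Because laminarity makes the children in $\child(X, Y)$ pairwise disjoint, each maximal run of consecutive old edges in $C$ is confined to a single child's vertex set $V(X', Y')$. The uniqueness property is essential here: whenever $1 \in X' \cap Y'$, so that the parent's observed choice $(1,1)$ lies inside the child, the child's observed choice must itself be $(1,1)$; by the recursive structure of the construction, the child's internal edges then inherit the same source/sink structure as the parent's new edges. For any child not containing $(1,1)$, either all row-1 vertices of $V(X, Y)$ or all column-1 vertices of $V(X, Y)$ are absent from $V(X', Y')$, so the child's internal old edges alone cannot complete a path from column 1 back to row 1 within the child.

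The final step is to show these constraints are incompatible with the existence of $C$. Each new edge in $C$ strictly decreases $\phi$; the old-edge runs, constrained by both the child's acyclicity (induction hypothesis) and the structural properties above, cannot collectively increase $\phi$ enough to close the cycle. The main obstacle is this last step: a careful case analysis on each child's type (full vs.\ $\child_R$-only vs.\ $\child_C$-only vs.\ interior-only) and on the position of its observed choice is needed to verify that the endpoints of every old-edge run in $C$ respect the global potential. Uniqueness handles the full-child case (preventing an in-child path from column 1 back to row 1 of the parent), and laminarity rules out cycles that would otherwise weave through multiple children, yielding the desired contradiction.
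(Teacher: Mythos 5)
Your proposal has the right outer shape (contradiction, plus a structural analysis of the four new edge sets and the confinement of old edges to single children), and your potential-function verification that the new edges alone form a DAG is correct. But there is a genuine gap at exactly the step you flag as ``the main obstacle,'' and the potential you define cannot close it. Inside a child in $\child_R(X,Y)$ an \emph{old} edge may run from an interior vertex ($\phi=1$) to a row-one vertex ($\phi=2$), so old-edge runs can increase $\phi$; no static potential on your five vertex classes is monotone on old edges, and the accounting ``old runs cannot collectively increase $\phi$ enough'' is not something the endpoints' $\phi$-values alone can certify. What is actually needed---and what the paper's proof supplies---is a directional observation about the new edges relative to the children: no edge of $E1\cup E2\cup E3\cup E4$ has both endpoints in the same child; no such edge points \emph{into} $V_R(X,Y)$ (the targets of $E1$, $E2$, $E3$ lie on column one or at $(1,1)$, which no child in $\child_R(X,Y)$ contains, while $E4$ targets lie in $V_C(X,Y)\cup V_L(X,Y)$); and no such edge points \emph{out of} $V_C(X,Y)\cup V_L(X,Y)$. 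Since old edges never cross a child boundary, any cycle meeting $V_R(X,Y)$ could never have entered it and any cycle meeting $V_C(X,Y)\cup V_L(X,Y)$ could never leave it; hence a cycle is either confined to a single child (impossible by the induction hypothesis, since no new edge lies inside a child) or avoids $V_R\cup V_C\cup V_L$ entirely, where only new edges exist and your potential (equivalently, the source/sink structure of row one, column one, and $(1,1)$) finishes the argument. This entry/exit analysis is the substance of the proof, and your sketch defers rather than supplies it.

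A smaller issue: the case ``$1\in X'\cap Y'$'' for which you invoke uniqueness does not arise. The paper first discards, without loss of generality, the smaller of any two observations sharing an observed choice; after that, uniqueness implies no child of $(X,Y)$ contains both row one and column one, since its observed choice would have to be $(1,1)$ and would have been removed. You in fact need this exclusion earlier than you use it: without it, $E3$ would add self-loops at vertices $(i,1)\in V_R(X,Y)$, and your claim that every new edge strictly decreases $\phi$ would already fail. Your remaining structural observations (old runs confined to single children by laminarity; a child missing row one or column one cannot internally connect column one back to row one) are correct but, as organized, do not combine into a complete argument.
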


\begin{proof}
We will show that the added edges $E'$ cannot create any cycles in the revealed-preference graph. Note that by our assumption, no game in $\child(X,Y)$  contains both row $1$ and column $1$, since the observed choice for this subgame would be $(1,1)$ by the uniqueness property of the data set. This implies that for any vertex $(i,j) \in V_R(X,Y)$, $j > 1$, and hence edges in $E3$ always go from a vertex in $V_R(X,Y)$ to a vertex not in $V_R(X,Y)$. Similarly, edges in \emph{E4} go from a vertex not in $V_C(X,Y) \cup V_L(X,Y)$ to a vertex in $V_C(X,Y) \cup V_L(X,Y)$. Thus, no two edges are incident on the same vertex.

We first show that there is no cycle that consists entirely of vertices from a single subgame in $\child(X,Y)$. To see this, note that every edge in $E1$ and $E2$ is incident on $(1,1)$, which is not in any subgame. Further, edges in $E3$ go from a vertex in $V_R(X,Y)$ to a vertex not in $V_R(X,Y)$. Similarly, edges in \emph{E4} go from a vertex not in $V_C(X,Y) \cup V_L(X,Y)$ to a vertex in $V_C(X,Y) \cup V_L(X,Y)$. Thus no edge in $E'$ has both end-points in the same subgame in $\child(X,Y)$, and hence $E'$ cannot create cycles contained in a single subgame in $\child(X,Y)$.

We now show that no cycle can contain any edge that has exactly one incident vertex in a subgame in $\child(X,Y)$. Note that the edges in $E \setminus E'$ introduced earlier have both vertices in the same subgame. Now suppose there is a cycle with such an edge $e=(v,w)$, and the subgame is in $\Gamma_R(X,Y)$. Such an edge must be in $E2$ or $E3$. However, by the way we add these edges, any edge in $E2$ or $E3$ must start from a vertex in $V_R(X,Y)$. There is no way to complete the cycle, since no edge in $E'$ goes to $V_R(X,Y)$. We can similarly show that no edge in a cycle can have exactly one vertex in $\Gamma_C(X,Y)$.

Lastly, it is easy to see that no cycle can only consist of vertices in $V_L(X,Y)$. All edges on these vertices are in $E'$, and by the construction of $E'$, $(1,1)$ is the only such vertex with both an incoming and an outgoing edge.
\end{proof}

The proof of Theorem~\ref{thm:laminar} follows immediately from the inductive construction described above. Specifically, following the construction gives us a revealed-preference graph that strongly implements the data set and is acyclic. Thus, from Corollary~\ref{cor:rpzerosumunique}, we can obtain a zero-sum game $(A,-A)$ that rationalizes $T$ and has a unique equilibrium in every subgame that appears in $T$.

\section{Proof of Theorem~\ref{thm:non-laminar}}
\label{sec:non-laminar}

In this section we show that the rank of a rationalizing game is governed by the degree of intersection among subgames in the observed data. In particular, given a rationalizable data set $T$ that satisfies the uniqueness property we can construct a rationalizing bimatrix game of rank at most the crossing span of $T$.

\begin{figure}
\centering
\includegraphics[scale=0.8]{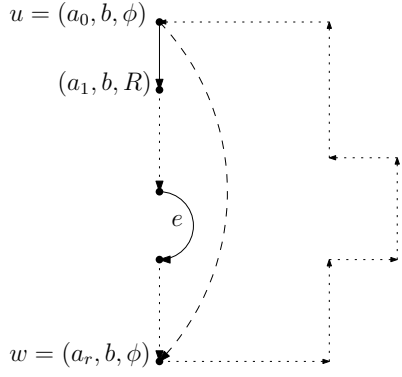}
\label{fig:acyclic}
\caption{\emph{A hypothetical cycle in the revealed-preference graph}}
\end{figure}

For the proof we extend the revealed-preference graphs from Section~\ref{sec:laminar}. If the given data set is not laminar, then the revealed-preference graph can contain cycles. We remove these cycles by replacing vertices corresponding to observed choices in crossing subgames by two vertices, a \emph{row} vertex that is incident only to row edges, and a \emph{column} vertex that is incident only to column edges.

\begin{definition}
\emph{
A directed graph $G=(V, E)$ is a \textbf{split revealed-preference graph} if there is a set $S \subseteq [n] \times [n]$ and a bijection $\sigma: V \rightarrow A \bigcup B$ so that:
\begin{enumerate}
\item Sets $A = \{(i,j,\emptyset)|(i,j) \not \in S\}$, and
  $B = \{(i,j,R),(i,j,C)|(i,j) \in S\}$.
\item For every $e = (v,w) \in E$, exactly one of the following
  statements is true: (a) the first coordinate of  $\sigma(v)$ and
  $\sigma(w)$ are equal, or (b) the second coordinate of  $\sigma(v)$ and
  $\sigma(w)$ are equal. Further, if the
  second coordinate is equal, this edge is a row edge, and the
  third coordinate in both $\sigma(v)$ and $\sigma(w)$ is not $C$. If
  the first coordinate is identical, this edge is a column edge, and
  the third coordinate in both $\sigma(v)$ and $\sigma(w)$ is not $R$.
\end{enumerate}
}
\label{def:split}
\end{definition}

This definition introduces a set $S$ of \emph{split vertices}.
We identify vertices by their
$\sigma$ value; i.e., refer to them as $(i,j,\emptyset)$, $(i,j,R)$,
or $(i,j,C)$. A vertex with $R$ in its third
coordinate is called a row vertex. A
vertex with $C$ in its third coordinate is called a column vertex,
while a vertex with $\emptyset$ as the third coordinate is called an
intact vertex. By the second condition in the definition, no row edges
are incident on column vertices, and no column edges are incident on
row vertices. Thus, a row vertex is never adjacent to a column vertex.

\begin{definition}
\emph{A split revealed-preference graph $G$ \textbf{implements} an observation $((i,j),X,Y)$ in a data set $T$ if $E$ contains the following edges:}
\begin{enumerate}
\item \emph{Edges $((i,j,A),(i',j,B))$ with $A, B \in \{R,\emptyset\}$ for each $i' \in X \setminus \{i\}$.}
\item \emph{Edges $((i,j',A),(i,j,B))$ with $A,B \in \{C, \emptyset\}$ for each $j' \in Y \setminus \{j\}$.}
\end{enumerate}
\label{def:splitimplement}
\end{definition}

As before, $G$ implements data set $T$ if it implements every
observation in the data set. Crucially, if $E$ consists of a minimal set of edges that implement a data set $T$, then the existence of an edge $(v,w)$ where $v = (i,j,A)$ and $w = (i',j',B)$, $A,B \in \{R,C,\emptyset\}$ implies that for some subgame $X \times Y$, both $(i,j) \in X \times Y$ and $(i',j') \in X \times Y$.

For a split revealed-preference graph,
define the \emph{row span} := $|\{i |  (i,j,R) \in V, j \in
[n]\}|$,   \emph{column span} := $|\{j |  (i,j,C) \in V, i \in
[n]\}|$; and \emph{span} of the graph to be the minimum of row and
column span.

\begin{lemma}
If split revealed-preference graph $G$ implements data set $T$ and is acyclic, then $T$ is rationalizable by a game of rank at most the span of $G$.
\end{lemma}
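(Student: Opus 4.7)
The plan is to generalize the construction used in the proof of Lemma~\ref{lem:rpzerosum} by exploiting the split structure to permit $A$ and $-B$ to disagree precisely at the split vertices, and then to argue that this disagreement confines the support of $C = A+B$ to a small set of rows and columns.

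First, since $G$ is acyclic, I would topologically sort its vertices and assign to each vertex $v$ a real value $\ell(v)$ so that $\ell(v) > \ell(w)$ whenever $(v,w) \in E$; this is the same ``topological level'' construction used for Lemma~\ref{lem:rpzerosum}. I would then define payoffs coordinate-wise: for each intact cell $(i,j) \notin S$ set $A_{ij} := \ell(i,j,\emptyset)$ and $B_{ij} := -\ell(i,j,\emptyset)$, and for each split cell $(i,j) \in S$ set $A_{ij} := \ell(i,j,R)$ and $B_{ij} := -\ell(i,j,C)$. Intuitively, the row vertex supplies the payoff used in row-deviation comparisons, and the column vertex supplies (up to sign) the payoff used in column-deviation comparisons.

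Next I would verify that every observation $((i,j),X,Y) \in T$ corresponds to a strict Nash equilibrium in the subgame $(X,Y)$ of $(A,B)$. Because $G$ implements $T$, for each $i' \in X \setminus\{i\}$ there is a row edge from the $(i,j)$-vertex to the $(i',j)$-vertex (of type $\emptyset$ or $R$, never $C$ by Definition~\ref{def:split}); the values that these two endpoints carry are exactly $A_{ij}$ and $A_{i'j}$, so $\ell(v) > \ell(w)$ yields $A_{ij} > A_{i'j}$. The symmetric argument with column edges, which by Definition~\ref{def:split} touch only $\emptyset$- or $C$-vertices whose values are exactly $-B_{ij}$ and $-B_{ij'}$, gives $B_{ij} > B_{ij'}$. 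Hence $(i,j)$ is a strict pure Nash equilibrium as required, and $(A,B)$ rationalizes $T$.

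It then remains to bound the rank of $C$. By construction, $C_{ij} = A_{ij} + B_{ij} = 0$ for every intact cell, so the support of $C$ is contained in $S$; projecting $S$ onto its row and column coordinates gives sets of sizes equal to the row span and column span respectively. Consequently $C$ has nonzero entries only in a submatrix of size $(\mathrm{row\ span}) \times (\mathrm{column\ span})$, so $\rank(C) \le \min\{\mathrm{row\ span}, \mathrm{column\ span}\}$, which is the span of $G$. I expect the main obstacle to be purely notational: carefully checking that the separation enforced in Definition~\ref{def:split}, namely that row edges avoid $C$-vertices and column edges avoid $R$-vertices, is exactly what is needed for the two independent $\ell$-assignments at a split cell to decouple cleanly and still deliver the strict Nash inequalities above.
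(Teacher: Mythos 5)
Your proposal is correct and follows essentially the same route as the paper: assign payoffs from topological levels of the acyclic graph (with the row vertex supplying $A_{ij}$ and the column vertex supplying $-B_{ij}$ at split cells), use the row-edge/column-edge separation of Definition~\ref{def:split} to get the strict Nash inequalities, and observe that $C=A+B$ vanishes off the split cells so its rank is bounded by the span. The only cosmetic difference is that the paper builds the level function by iteratively peeling vertices with no outgoing edges rather than fixing a topological ordering up front.
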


\begin{proof}
Our proof explicitly constructs payoff matrices $A,
B$ for the row and column players and shows that these rationalize
the data set.
In our construction, $A_{ij} + B_{ij} = 0$ if there is an intact vertex $(i,j,\emptyset) \in E$. The only strategy profiles $(i,j)$ where $A_{ij} + B_{ij} \neq 0$ correspond to split vertices. It follows immediately that the number of linearly independent rows in $C = A+B$ is bounded by the row span of $G$, the number of linearly independent columns is bounded by the column span, and hence the rank of $C$ is at most the span of $G$.

Since $G$ is acyclic, it has a topological ordering.
We traverse $V$ following such an ordering, starting with vertices that have no outgoing edges. Start with $l = 1$ where $l$ is the length traversed so far in the
ordering. For every vertex
$v$ with no outgoing edge, if $v$ is an intact vertex $(i,j,\emptyset)$, $A_{ij} = l$ and $B_{ij} = -l$. If $v=(i,j,R)$, set $A_{ij} = l$, else set $B_{ij} = -l$. Remove these vertices
and the incoming edges for these vertices, increment $l$ by 1, and
recurse. Fill in the remaining entries in $A$ arbitrarily and for these entries
set $B_{ij} = -A_{ij}$. If there is an
edge from $(i,j)$ to $(i',j')$, then $A_{ij} > A_{i'j'}$, and $B_{ij} < B_{i'j'}$.

We now show that the game $(A,B)$ rationalizes $T$. Since $G$
implements $T$, for every observation $((i,j),X,Y) \in T$, $G$
contains an edge from $(i,j)$ to each vertex $(i',j)$ for $i' \in X
\setminus \{i\}$. Then $A_{ij} > A_{ij'}$ by construction.
Similarly, $G$ contains an edge from each vertex $(i,j')$ for $j' \in
Y \setminus \{j\}$ to $(i,j)$. Then $B_{ij} > B_{ij'}$.
Thus $(i,j)$ must be an equilibrium in the subgame $(X,Y)$.
This is true for every observation, hence $(A,B)$ rationalizes data set $T$.
\end{proof}

Unlike the previous section, we cannot guarantee uniqueness of
equilibria in the rationalizing game. Even if a
data set satisfies uniqueness, there may be no rationalizing game with
unique equilibria in every subgame. An example of this is shown in
Figure~\ref{fig:highreplace}.

Let $\subgames$ denote the subgames contained in data set $T$.
Let $\subgames_\cross$ be the set of crossing subgames in $T$:
$\subgames_\cross:= \{ S \in \subgames \mid \exists S' \in \subgames
\textrm{ such that  } S \textrm{ and } S' \textrm{ cross}\}$ and let
$\subgames_\laminar:= \subgames \setminus \subgames_\cross$. Subgames
in $\subgames_\laminar$  form a
laminar family. Let $\osp$ be the set of observed choices in $T$.
$\osp_\cross$ is the set of observed choices for subgames in
$\subgames_\cross$, and  $\osp_\laminar := \osp
\setminus \osp_\cross$.

We now define a split revealed-preference graph from $T$. Let
$V$ and $E$ be as in
Definition~\ref{def:splitimplement}, so that  $G=(V,E)$ implements the
data set $T$.  
Define $E_\cross$ as the set of edges
incident on the split vertices, and $E_\laminar = E \setminus E_\cross$.

\begin{claim}
The graph $(V, E_\laminar)$ is acyclic.
\label{clm:nlacyclic1}
\end{claim}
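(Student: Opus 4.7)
The plan is to show that every edge of $E_\laminar$ arises from implementing an observation whose subgame lies in $\subgames_\laminar$, and then to run an inductive acyclicity argument on the laminar tree of $\subgames_\laminar$, mirroring Claims~\ref{clm:baseacyclic} and~\ref{clm:inductacyclic}.

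First, I would establish the edge characterization. By Definition~\ref{def:splitimplement}, every edge demanded by the implementation of an observation $O = ((a,b), X, Y) \in T$ has an endpoint of the form $(a, b, \cdot)$; since $E$ is minimal, every edge in $E$ is demanded by at least one such $O$. For any $e = (v, w) \in E_\laminar$ both endpoints are intact, so the observed-choice endpoint must equal $(a, b, \emptyset)$ for some $(a, b) \in \osp_\laminar$, forcing $(X, Y) \in \subgames_\laminar$; both endpoints of $e$ project into $X \times Y$ and share a row or column coordinate with $(a, b)$. Observe also that $\subgames_\laminar$ is itself a laminar family, since two of its members that crossed would lie in $\subgames_\cross$ by definition.

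Second, I would organize $\subgames_\laminar$ as a laminar tree $\tree_\laminar$ and induct on its height as in Claims~\ref{clm:baseacyclic} and~\ref{clm:inductacyclic}. In the base case (height one), the subgames are pairwise disjoint leaves each carrying a single observation (by the uniqueness property inherited from $T$), so every edge of $E_\laminar$ sits inside one of them; within that subgame only the observed choice $(a, b, \emptyset)$ has both incoming and outgoing $E_\laminar$-edges, forbidding any cycle, exactly as in Claim~\ref{clm:baseacyclic}. For the inductive step at a height-$k$ subgame $(X, Y)$ with observation $((a, b), X, Y)$, the edges demanded specifically by this observation touch only $(a, b, \emptyset)$, and uniqueness precludes any child in $\child(X, Y)$ from containing both row $a$ and column $b$. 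Consequently these edges enter or leave each child subgame along exactly one coordinate direction, and the regional case analysis on $V_R$, $V_C$, and $V_L$ from Claim~\ref{clm:inductacyclic} rules out cycles that cross between siblings or escape a child subgame.

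The main obstacle is that $G$ here is an arbitrary minimal implementation rather than the explicit construction of Section~\ref{sec:laminar}, which additionally carried the auxiliary edges $E_3, E_4$ needed for strong implementation. Those extra edges were present only to enforce uniqueness of equilibria in the rationalizing game, not to establish acyclicity: removing them from the Section~\ref{sec:laminar} construction still leaves an acyclic revealed-preference graph, because the cycle-avoidance argument depends only on the structural fact that each edge touches the observed choice of a laminar observation and lies inside its subgame. Hence the inductive reasoning of Claim~\ref{clm:inductacyclic} carries over, with only cosmetic adaptation, to the sparser edge set $E_\laminar$, yielding acyclicity of $(V, E_\laminar)$.
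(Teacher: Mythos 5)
Your proposal is correct and follows essentially the same route as the paper: the key step in both is that every edge of $E_\laminar$ must implement an observation whose subgame lies in $\subgames_\laminar$ (since edges implementing crossing observations are incident on split vertices), after which acyclicity is inherited from the laminar construction of Section~\ref{sec:laminar}. The paper simply maps intact vertices into the already-proven-acyclic graph $G'$ for the laminar sub-dataset $T'$ and observes that this map preserves edges, whereas you re-run the induction of Claims~\ref{clm:baseacyclic} and~\ref{clm:inductacyclic}; your fallback observation that dropping the auxiliary edges $E3,E4$ preserves acyclicity (a subgraph of a DAG is a DAG) makes this equivalent.
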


\begin{proof}
Let $T'$ be the set of observations in $T$ over subgames in
$\subgames_\laminar$. The revealed-preference graph defined from $T'$,
$G'=(V',E')$  is acyclic and implements $T'$. Let
$\rho((i,j,\emptyset)) \rightarrow (i,j)$ map
 intact vertices in $V$ to vertices in $V'$. Each edge in $E_\laminar$
  implements an observation in $T'$, since all edges to
 implement observations in $T \setminus T'$ are incident on split
 vertices. As $G'$ implements $T'$, the mapping $\rho$ must
 preserve edges: if there is an edge $(v,w)$ in $E_\laminar$, then
 there is an edge $(\rho(v),\rho(w))$ in $E'$. Since $G'$ is
 acyclic, so is  $(V,E_\laminar)$.
\end{proof}

\begin{claim}
If $T$ is rationalizable, then no cycle in $G$ consists entirely of row edges or entirely of column edges.
\label{clm:nlboth}
\end{claim}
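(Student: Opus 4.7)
The plan is to rule out monochromatic cycles by directly tracing the strict payoff inequalities that any rationalizing game must satisfy. The key observation is that every row edge of $G$ encodes a strict inequality between two entries of $A$ lying in the same column (and every column edge a strict inequality between two entries of $B$ lying in the same row); chaining such inequalities around a directed cycle would force a payoff to strictly exceed itself.

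In detail, I would argue as follows. Suppose for contradiction that $v_1 \to v_2 \to \cdots \to v_k \to v_1$ is a directed cycle in $G$ whose edges are all row edges. By Definition~\ref{def:split}, every row edge joins two vertices sharing their second coordinate, so there is a common $j^\star$ with $v_\ell = (i_\ell, j^\star, \cdot)$ for all $\ell$. By Definition~\ref{def:splitimplement}, a row edge from $(i,j,\cdot)$ to $(i',j,\cdot)$ is introduced only in order to implement some observation $((i,j),X,Y) \in T$ with $i' \in X \setminus \{i\}$; in particular the \emph{source} of the edge corresponds to the observed choice, while the \emph{target} corresponds to a feasible row deviation. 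Thus for each $\ell$ (indices mod $k$), the edge $v_\ell \to v_{\ell+1}$ witnesses an observation in which $(i_\ell, j^\star)$ is the observed choice and $i_{\ell+1}$ is an admissible row deviation.

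Now let $(A,B)$ be any game rationalizing $T$. Strictness of the pure Nash equilibrium at each such observation yields $A_{i_\ell j^\star} > A_{i_{\ell+1} j^\star}$ for every $\ell$. Chaining these around the cycle gives
\[
A_{i_1 j^\star} > A_{i_2 j^\star} > \cdots > A_{i_k j^\star} > A_{i_1 j^\star},
\]
a contradiction. The case of an all-column-edge cycle is handled symmetrically: all vertices share a common first coordinate $i^\star$, each edge $v_\ell \to v_{\ell+1}$ witnesses an observation where $(i^\star, j_\ell)$ is the observed choice and $j_{\ell+1}$ is a column deviation, and rationalizability forces $B_{i^\star j_\ell} > B_{i^\star j_{\ell+1}}$, so chaining delivers $B_{i^\star j_1} > \cdots > B_{i^\star j_1}$.

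I do not anticipate a real obstacle; the only subtlety is verifying the source/target convention in Definition~\ref{def:splitimplement} (namely that the tail of each edge is the observed choice and the head is the deviation) so that the induced inequality points in the correct direction. Once this is pinned down, the claim reduces to a one-line telescoping argument in the relevant payoff matrix, and nothing about laminarity, uniqueness, or the splitting of vertices plays any role beyond fixing which coordinate is held constant along a monochromatic cycle.
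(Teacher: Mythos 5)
Your proof is correct and is essentially the paper's own argument: fix a monochromatic cycle, observe that all its vertices share a coordinate, read off a strict payoff inequality from each edge via rationalizability of the observation that introduced it, and chain these around the cycle to a contradiction. The one convention to fix is that column edges in Definition~\ref{def:splitimplement} point from the deviation to the observed choice (the reverse of the row-edge convention you assumed holds uniformly), so in the all-column case the chained inequalities run in the opposite direction --- but the telescoping contradiction is unaffected.
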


\begin{proof}
Suppose for a contradiction that $T$ is rationalizable and $G$ has
such a cycle $K$. Let $(A,B)$ be a game that rationalizes $T$. Assume
that cycle $K$ is entirely on row edges; the case where $K$ is
entirely on column edges can be ruled out similarly. Every vertex in
$K$ must have the same second coordinate, say $j$. By definition of
revealed-preference graph, there is an edge  $e = ((i,j),(i',j)) \in
K$ only if $A_{ij} < A_{i'j}$. This is obviously incompatible with the
cycle $K$.
\end{proof}

\begin{lemma}Let $T$ be rationalizable. 
The graph $G$ is acyclic and has span exactly equal to the crossing span of the data set.
\end{lemma}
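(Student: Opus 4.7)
The plan has two parts, matching the two conjuncts in the statement.

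For the span of $G$, the claim is immediate from the construction. Since the split-vertex set $S$ in Definition~\ref{def:split} is instantiated as $S = \osp_\cross$, a row vertex $(i,j,R)$ lies in $V$ exactly when $(i,j)\in \osp_\cross$, and similarly for column vertices. Consequently the row span equals $|\{i : \exists j,\ (i,j)\in \osp_\cross\}|$ and the column span equals $|\{j : \exists i,\ (i,j)\in \osp_\cross\}|$, whose minimum is the crossing span of Definition~\ref{defn:crossing-span}.

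For acyclicity I would argue by contradiction: suppose $G$ contains a cycle $K$. By Claim~\ref{clm:nlacyclic1}, the subgraph $(V,E_\laminar)$ is acyclic, so $K$ must traverse at least one edge of $E_\cross$. By Claim~\ref{clm:nlboth} and the rationalizability of $T$, $K$ cannot consist entirely of row edges or entirely of column edges and so contains edges of both types. The second condition of Definition~\ref{def:split} forces every edge incident to a row vertex to be a row edge and every edge incident to a column vertex to be a column edge, so no transition between edge types along $K$ can occur at a split vertex; every such transition must occur at an intact vertex, and at least one such transition exists.

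The crux, and the main obstacle, is to rule out a cycle all of whose edge-type transitions occur at intact vertices. My plan is to decompose $K$ into alternating maximal monochromatic runs $R_1,C_1,\ldots,R_s,C_s$ and to consider the intact transition vertices $p_k=(\alpha_k,\beta_k)$ (column-to-row turns) and $q_k=(\alpha_{k+1},\beta_k)$ (row-to-column turns), where $p_{k+1}=(\alpha_{k+1},\beta_{k+1})$. Each $p_k$ is an observed choice of two observations in $T$ (the one producing the last edge of $C_{k-1}$ into $p_k$, and the one producing the first edge of $R_k$ out of $p_k$); because $p_k$ is intact, $p_k \in \osp_\laminar$ and both subgames lie in $\subgames_\laminar$. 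A case analysis using laminarity of these subgames together with the uniqueness property of Definition~\ref{defn:unique} forces nesting relations between them to propagate along the cycle. Iterating these containments around $K$ forces successive transition vertices to coincide, contradicting the distinctness $\alpha_k\neq\alpha_{k+1}$ and $\beta_k\neq\beta_{k+1}$ imposed by the cycle's geometry, thereby completing the proof.
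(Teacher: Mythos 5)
Your treatment of the span is correct and is exactly the paper's: split vertices are introduced precisely at the points of $\osp_\cross$, so the row and column spans of $G$ coincide with the number of rows and columns spanned by $\osp_\cross$, and the claim follows from Definition~\ref{defn:crossing-span}. Your setup for acyclicity is also sound: a hypothetical cycle $K$ must use an edge of $E_\cross$ (Claim~\ref{clm:nlacyclic1}), cannot be monochromatic (Claim~\ref{clm:nlboth}), and can only change edge type at an intact vertex, since Definition~\ref{def:split} forbids row edges at column vertices and vice versa.

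The gap is exactly where you flag it, and the sketch you give for the crux would not go through as stated. Your plan analyzes only the two observations whose observed choice is a column-to-row turn $p_k$, and for those laminarity does apply. But the edges \emph{interior} to a monochromatic run are generated by a chain of further observations---every vertex with an outgoing row edge is itself the observed choice of some observation---and the interior vertices of a run are split vertices, so those observations belong to $\subgames_\cross$ and laminarity tells you nothing about them directly. Relating $p_k$ to $q_k$ therefore requires an induction along the run, which your sketch omits. Moreover, your proposed endgame (``iterating containments around $K$ forces successive transition vertices to coincide'') is doubtful: the row-to-column turns $q_k$ need not be observed choices at all (they merely receive a row edge and emit a column edge as non-choice members of two subgames), so there is no observation at $q_k$ through which a containment can propagate. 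The paper resolves this differently. For a maximal run of split vertices between intact vertices $u$ and $w$, it shows by induction along the run that every subgame generating an edge of the run is nested inside the subgame $(X,Y)$ of the observation at $u$: crossing $(X,Y)$ is impossible because it would place $u$'s observed choice in $\osp_\cross$ and make $u$ a split vertex, and the reverse containment is excluded by the uniqueness property. Hence $w \in X\times Y$ and there is a direct edge $(u,w)\in E_\laminar$, so the split vertices can be excised from $K$ one run at a time, leaving a cycle in $(V,E_\laminar)$ and contradicting Claim~\ref{clm:nlacyclic1}. That shortcut-and-excise argument is the missing ingredient in your proof.
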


\begin{proof}
We first prove  acyclicity. We will show that if there is a cycle
in $G$, then there is a cycle containing only edges in
$E_\laminar$. This contradicts Claim~\ref{clm:nlacyclic1}, and hence
$G$ must be acyclic. Suppose for the contradiction that $G$ does
contain a cycle $K$. By Claim~\ref{clm:nlacyclic1}, $K$ must contain
an edge $e$ incident on a split vertex $v$.  Assume that $e$ is a row edge; the case where $e$ is a column edge can be handled similarly. Then $v$ must be a row vertex.

By Claim~\ref{clm:nlboth}, $K$ cannot contain only split vertices, as
these would all be on the same row or column.
Let $p'=(v_1, \dots, v_{r-1})$ be a path that contains $v$, is part of
cycle $K$, and the vertex $u$ preceding $v_1$ and $w$
succeding $v_{r-1}$ in $K$ are intact. Observe that $u\neq w$ by
Claim~\ref{clm:nlboth}. 

Let  $p = (v_0 = u, v_1, \dots, v_r =w)$ be the path in $K$ between
$u$ and $w$. We claim that $v_0$ has an 
edge to each $v_i$ in $p$. Then in particular there is an edge $(u,w)$ that is in $E_\laminar$, and we can remove split vertex $v$ from cycle
$K$. Continuing in this manner, we will be left with a cycle only on
intact vertices, which contradicts Claim~\ref{clm:nlacyclic1}. 

Let $v_0 = (a_0, b, \emptyset)$, $v_r = (a_r, b, \emptyset)$, and $v_i
= (a_i,b,R)$ for $1 < i < r$. Our proof is by induction on $i$. First,
the edge $(v_0, v_1)$ is incoming on vertex $v_1$ and is a 
row edge; it must correspond to an observation $O =
((a_0,b),X,Y)$.
By construction of the revealed-preference graph,
the existence of edge $(v_0, v_1)$ implies that $v_1 = (a_1,b) \in X \times Y$. 

Now the inductive step. Let $(a_{i-1},b) \in X \times Y$. 
Edge $(v_{i-1}, v_i)$ is incoming on vertex $v_i$ so it must
correspond to an  observation $O = ((a_{i-1},b),X',Y')$.
By construction of the revealed-preference graph, $v_i = (a_i,b) \in
X' \times Y'$.  Importantly, 
subgame $(X',Y')$ does not cross $(X,Y)$: otherwise the observation
$(X,Y) \in \subgames_\cross$, and $(a_0,b)$ would be a split vertex. Since
$(a_{i-1},b) \in X \times Y \cap X' \times Y'$, either $X \times Y
\subset X' \times Y' $ or $ X' \times Y' \subset X \times Y$. The
former is ruled out by the uniqueness property since otherwise there
are two observations $((a_0,b),X,Y)$ and $((a_{i-1},b),X',Y')$ with $X
\times Y \subset X' \times Y'$ and $(a_{i-1},b) \in X \times Y$, but
$(a_{i-1},b) \neq (a_0,b)$. Thus, $(a_i,b) \in X \times Y$, and
there is a row edge from $(a_0,b)$ to $(a_i,b)$.

For the proof of the span of $G$, fix $(i,j) \in [n] \times [n]$. By
the construction of $G$, there is a split vertex pair
$((i,j,R),(i,j,C)$ if and only if $(i,j) \in \osp_\cross$. It follows
that the span of $G$ must be exactly identical to the crossing span of
$T$. 
\end{proof}


\section{Concluding remarks}

This paper characterizes the empirical implications of rank in bimatrix games.  Our results present a fairly complete characterization of how the empirical implications of structural complexity (formalized via rank) depend on the richness of the observed data (formalized via the crossing span).  In particular, we show that the observed data must have a large enough crossing span in order for low-rank to be refutable.

The results in the paper motivate a number of interesting directions.  For example, there is a $\Theta(\sqrt{n})$ gap between Theorems \ref{thm:highrank} and \ref{thm:non-laminar} due, in part, to the fact that our lower bound depends on Hadamard matrices, for which the rank is not precisely known. 

The study of both mixed Nash and correlated equilibria is of great interest, especially given recent results on the hardness of computing mixed Nash. However, it is not trivial to model observations for either of these solution concepts. Treating observed behavior as a sample of theoretical mixed choices is not easy to set up conceptually. In fact the literature in economics has yet to produce results of this case. We believe that our results for pure Nash equilibrium give insight into these solution concepts as well. 

More broadly, the empirical perspective on complexity is under-explored. There exist initial results for consumer choice theory \cite{echengolovinwierman2011}, general equilibrium theory \cite{echengolovinwierman2011}, and the theory of non-cooperative games (this paper).  However, each of these areas warrant further study.  Additionally, the empirical approach can shed light on the testable implications of complexity in other economic theories, such as the theory of stable matchings.

\bibliographystyle{plain}
\bibliography{rv-nash}

\end{document}